\renewcommand*{\backref}[1]{}
\renewcommand*{\backrefalt}[4]{%
\ifcase #1%
\marginpar{\tiny no cite}
\or
 Cited
on p.~#2.%
\else
  Cited
on pp.~#2.%
\fi
}
\def\NAT@spacechar{~}
\newtheorem{definition}{Definition}
\newtheorem{theorem}{Theorem}
\newtheorem{lemma}{Lemma}
\newtheorem{observation}{Observation}
\newtheorem{proposition}{Proposition}
{\rm\bfseries}{\itshape}
\newcommand{\dash}{\nobreakdash-\hspace{0pt}}
\newcommand{\DAGR}{\textsc{DAG Realization}\xspace}
\newcommand{\TPart}{\textsc{3\dash Partition}\xspace}
\newcommand{\SF}{\textsc{Sequence Filling}\xspace}
\renewcommand{\S}{\ensuremath{\mathcal{S}}}
\newcommand{\T}{\ensuremath{\phi}}
\renewcommand{\P}{\mathcal{P}}
\newcommand{\A}{\mathcal{A}}
\newcommand{\pot}{\ensuremath{\Delta^2}}
\newcommand{\superTypes}{\ensuremath{\mathcal{T}^S}}
\newcommand{\N}{\ensuremath{\mathbb{N}}}
\DeclareMathOperator{\pos}{pos}
\newcommand{\nRepOr}{non-repeating ordering\xspace}
\newcommand{\tOrd}{topological ordering\xspace}
\newcommand{\TopOr}{realizing \tOrd}
\newcommand{\pTopOr}{partial \TopOr}
\title{\textbf{NP-Hardness and Fixed-Parameter Tractability of Realizing Degree Sequences with Directed Acyclic Graphs}}
\author{
Sepp Hartung and André Nichterlein \medskip \\  
  Institut f\"ur Softwaretechnik und Theoretische Informatik, \\ TU Berlin, Berlin, Germany \\
\small{\texttt{\{sepp.hartung,andre.nichterlein\}@tu-berlin.de}}}
\date{}
\newcounter{Bew1}
\newcounter{Bew2}
\begin{document}
\def\sectionautorefname{Section}
\def\subsectionautorefname{Subsection}

\maketitle

\begin{abstract}

In graph realization problems one is given a degree sequence and the task is to decide whether there is a graph whose vertex degrees match to the given sequence.
This realization problem is known to be polynomial-time solvable when the graph is directed or undirected.
In contrary, we show NP-completeness for the problem of realizing a given sequence of pairs of positive integers (representing indegrees and outdegrees) with a \emph{directed acyclic graph}, answering an open question of Berger and Müller-Hannemann [FCT 2011].
Furthermore, we classify the problem as fixed-parameter tractable with respect to the parameter ``maximum degree''.

\end{abstract}

\section{Introduction}

\citet{BM11} introduced the following problem:

\begin{center}
	\begin{minipage}{0.95\textwidth}
	\defDecprob{DAG Realization}
		{A multiset~$\S = \left\{ \binom{a_1}{b_1}, \ldots, \binom{a_n}{b_n} \right\}$ of integer pairs with~$a_i, b_i \geq 0$.}
		{Is there a directed acyclic graph (without parallel arcs and self-loops) that admits a labeling of its vertex set~$\{v_1, \ldots, v_n\}$ such that for all $v_i \in V$ the indegree is~$a_i$ and the outdegree is~$b_i$?}
	\end{minipage}
\end{center}
If the \emph{degree sequence}~$\S$ is a yes-instance, then~$\S$ is called \emph{realizable} and the corresponding directed acyclic graph (dag for short) $D$ is called a \emph{realizing dag} for~$\S$.
\citet{BM11} showed that this problem is polynomial-time solvable for special types of degree sequences, but left the complexity of the general problem as their main open question.
We answer this question by showing that \DAGR{} is NP-complete. 
Moreover, on the positive side we classify \DAGR as fixed-parameter tractable with respect to the parameter maximum degree $\Delta:=\max\{a_1,b_1,\ldots,a_n,b_n\}$. The corresponding algorithm actually constructs for yes-instances a realizing dag.

\paragraph{\bf Related Work.} It is known for a long time that deciding whether a given degree sequence (a multiset of positive integers) is realizable with an \emph{undirected graph} is polynomial-time solvable.
There are characterizations for realizable degree sequences due to \citet{EG60} and algorithms by \citet{Hav55} and \citet{Hak62}.
In the case, where one asks whether there is a \emph{directed graph} realizing the given degree sequence (a multiset of positive integer pairs), has also been intensively studied: See \citet{Gal57,Rys57,Ful60,Che66} for characterizations of digraph realizations and \citet{KW73} for polynomial-time algorithms.
The problem of realizing degree sequences has also been studied in context of (loop-less) multigraphs, where the aim is to minimize or maximize the number of multi-edges~\cite{HWW08}.

\section{Preliminaries}

We set $\N:=\{0,1,2,\ldots\}$. We denote with $\uplus$ the multiset sum (e.g~$\{1,1\} \uplus \{1,2\} = \{1,1,1,2\}$). 

A \emph{parameterized problem} $(I,k)$, consisting of the problem instance~$I$ and the parameter~$k\in \N$, is \emph{fixed-parameter tractable} if it can be solved in $f(k)\cdot n^c$ time. Thereby, $f$~is a computable function solely depending on~$k$ and~$c\in \N$ is a constant independent from~$I$ and~$k$. For a more detailed introduction to parameterized algorithmics and complexity we refer to the monographs \cite{DF99, FG06,Nie06}.

We denote directed graphs by~$D=(V,A)$ with vertex set~$V$ and arc set~$A\subseteq V\times V$. The indegree of~$v\in V$ is denoted by~$d^-(v)$ and the outdegree by~$d^+(v)$. Correspondingly, for a degree sequence~$\S$ and an \emph{element} $s \in \S$ with~$s = \binom{a}{b}$, we set~$d^-(s) := a$ and~$d^+(s) := b$.

A directed graph~$D=(V,A)$ is a \emph{dag} if it does not contain a cycle.
A cycle is a vertex sequence $v_1,\ldots,v_l$ such that for all $1\le i<l:$ $(v_i,v_{i+1})\in A$ and $(v_l,v_1)\in A$. 
Each dag~$D$ admits a \emph{\tOrd}, that is, an ordering of all its vertices $v_1,\ldots,v_n$ such that for all arcs $(v_i,v_j)\in A$ it holds that $i<j$.
Consequently, for a realizing dag we call a corresponding \tOrd a \emph{\TopOr}. 

We use the \emph{opposed order}~$\leq_{\rm{opp}}$ for the elements of a degree sequence~$\S$, as introduced by \citet{BM11}:
\begin{definition}
	$\binom{a_1}{b_1} \leq_{\rm{opp}} \binom{a_2}{b_2} \iff (a_1 \leq a_2 \wedge b_1 \geq b_2)$ 
%
\end{definition}
Note that there might be elements in the degree sequence~$\S$ that are not comparable with respect to the opposed order.
However, we can always assume that a realization does not collide with the opposed order and thus \DAGR is polynomial-time solvable in case of all elements of~$\S$ are comparable.

\begin{lemma}[{\cite[Corollary 3]{BM11}}] \label{obs:oppOrderTopoplogical}
	Let~$\S=\left\{\binom{a_1}{b_1},\ldots,\binom{a_n}{b_n}\right\}$ be a realizable degree sequence. Then, there exists a \TopOr~$\phi$ such that for all $1\le i,j\le n$ with $s_i=\binom{a_i}{b_i}\leq_{\rm{opp}}\binom{a_j}{b_j} = s_j$ and $s_i\neq s_j$, it holds that in~$\phi$ the position of the vertex that corresponds to~$s_i$ is smaller than the position of the vertex that corresponds to~$s_j$.
\end{lemma}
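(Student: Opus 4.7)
\medskip
\noindent
\textbf{Proof plan.} My plan is an exchange argument. Start with any realizing dag $D$ of $\mathcal{S}$ together with an arbitrary topological ordering $\phi$, and define the \emph{defect} of $(D,\phi)$ to be the number of position pairs $(i,j)$ with $i < j$ whose degree pair at position $i$ is strictly opposed-greater than the pair at position $j$. The goal is to drive the defect to zero by a finite sequence of local modifications; the final ordering then satisfies the condition required by the lemma.

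The core operation is an \emph{adjacent exchange} at consecutive positions $k$ and $k{+}1$. Let $u$ and $w$ be the vertices at these positions with degree pairs $\binom{a}{b}$ and $\binom{a'}{b'}$. I want to construct a realizing dag $D'$ with the same topological ordering $\phi$ in which these two pairs have been swapped, so that $u$ carries $\binom{a'}{b'}$ and $w$ carries $\binom{a}{b}$. The construction keeps every arc that is not incident to $\{u,w\}$, and redistributes the arcs incident to $u$ or $w$: each outside vertex $v_z$ retains its total number of arcs to $\{u,w\}$, so its own in- and out-degree are preserved, but individual arcs may be rerouted between $u$ and $w$ to meet the new target degrees. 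A pigeonhole count using the opposed-order inequalities $a' \leq a$ and $b' \geq b$ shows that this redistribution is always feasible: the number of common in-neighbors of $u$ and $w$ among earlier positions is bounded by $a'-1$ when $(u,w)\in A(D)$ and by $a'$ otherwise, with symmetric bounds for common out-neighbors, leaving enough slack to avoid creating parallel arcs. Acyclicity is automatic because $\phi$ is not modified.

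With the exchange tool in hand, defect-reduction proceeds in bubble-sort style. Pick a violation $(i,j)$ with $j-i$ minimum. If $j=i+1$, the exchange removes this violation, and the statuses of pairs connecting $\{i,i{+}1\}$ with the remaining positions flip symmetrically, so the defect decreases by exactly one. If $j-i > 1$, minimality together with the transitivity of $\leq_{\rm{opp}}$ forces $s_i$ and $s_{i+1}$ to be opposed-incomparable (otherwise either $(i,i{+}1)$ or $(i{+}1,j)$ would itself be a shorter violation, or the chain $s_i \leq_{\rm{opp}} s_{i+1} \leq_{\rm{opp}} s_j$ would contradict $s_i >_{\rm{opp}} s_j$). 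Performing the exchange at the incomparable adjacent pair $(i,i{+}1)$ leaves the defect unchanged but shortens the offending violation to $(i{+}1,j)$, strictly reducing the minimum gap. After finitely many steps the defect reaches zero, and the resulting ordering is the desired $\phi$.

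The main obstacle is the arc-counting in the exchange construction: carefully verifying that the bounds on common in- and out-neighbors, in combination with the relevant inequalities, really do rule out the creation of parallel arcs in every sub-case, especially the delicate one in which $(u,w)$ is already an arc of $D$ and the exchange is applied at an opposed-incomparable (rather than strictly violating) pair.
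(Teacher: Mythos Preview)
The paper itself does not prove this lemma; it is quoted as a result of Berger and M\"uller-Hannemann without argument. So your proposal has to stand on its own.

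Your adjacent exchange at a strictly \emph{violating} pair (where $a'\le a$ and $b'\ge b$) is sound: the bounds $c_L\le a'-\epsilon$ and $c_R\le b-\epsilon$ make the redistribution go through. The problem is exactly where you yourself flag ``the main obstacle'': the exchange at an \emph{incomparable} adjacent pair, which your termination scheme requires whenever the minimum-gap violation has gap at least~$2$. That exchange can fail outright. Consider the ordering $v_1,v_2,v_3,u,w,x,v_7,v_8,v_9$ with degree pairs $\binom{0}{1},\binom{0}{1},\binom{0}{1},\binom{3}{2},\binom{1}{1},\binom{2}{3},\binom{1}{0},\binom{1}{0},\binom{1}{0}$, realized by $v_1,v_2,v_3\to u$; $u\to w$; $u,w\to x$; $x\to v_7,v_8,v_9$. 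The unique violation is $(u,x)$ at gap~$2$, so your scheme swaps at the incomparable pair $(u,w)$. Here $(u,w)\in A$ and $O_u=O_w=\{x\}$, giving $c_R=b'=1$; after the swap $u$ would carry $\binom{1}{1}$ and hence have $b'-\epsilon=0$ out-arcs into~$R$, yet $x$ still needs two in-arcs from $\{u,w\}$. Dropping the arc $(u,w)$ instead unbalances the in-side count. No redistribution preserving outside vertices' arc counts to $\{u,w\}$ exists, and your bubble-sort stalls. (The sequence \emph{does} admit an opposed-respecting realization---place $\binom{2}{3}$ ahead of $\binom{3}{2}$---but your local move cannot reach it from this start.) A repair would need either a genuinely non-local exchange for the incomparable case or a descent scheme that never relies on swapping incomparable adjacent pairs.
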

Our paper is organized as follows: The next section contains the proof of the NP-hardness and in \autoref{sec:fpt} we show that \DAGR is fixed-parameter tractable with respect to the parameter maximum degree~$\Delta$.

\section{NP-Completeness}\label{sec:NP-Hardness}

In this section we show the NP-hardness of \DAGR{} by giving a polynomial-time many-to-one reduction from the strongly NP-hard problem \TPart{}~\cite{GJ79}: 
\begin{center}
\begin{minipage}{0.95\textwidth}
\defDecprob{\TPart}
	{A sequence~$\A = a_1, \ldots, a_{3m}$ of~$3m$ positive integers and an integer~$B$ with~$\sum_{i=1}^{3m} a_i = mB$ and~$\forall i: B/4 < a_i < B/2$.}
	{Is there a partition of the $3m$ integers from~$\A$ into~$m$ disjoint triples such that in every triple the three elements add up to~$B$?}
\end{minipage}
\end{center}
%
%
This section is organized as follows: First we describe the construction of our reduction and explain the idea of how it works. 
Then, we prove the correctness in the remainder of the section.


\paragraph{Construction.}
Given an instance~$(\A,B)$ of \TPart{}, we construct an equivalent instance~$\S$ of \DAGR{} as follows:
$$ \S := X_0, X_1, \ldots, X_m, \alpha_1, \alpha_2, \ldots, \alpha_{3m} $$
where~$\alpha_i = \binom{a_i}{a_i}$, $1 \le i \le 3m$. The~$X_i$, $0 \le i \le m$, are subsequences which we formally define after giving the idea of the construction.
We call an element from a subsequence~$X_i$ an \emph{$x$-element} and the~$\alpha_j$ are called \emph{$a$-elements}.
In a realizing dag~$D$ the vertices realizing $x$-elements are called \emph{$x$-vertices} and the vertices realizing $a$-elements are called \emph{$a$-vertices}.

The intuition of the construction is that a dag~$D$ realizing~$\S$ (if it exists) looks as follows:
The vertices realizing elements of a subsequence $X_i$, $0 \le i \le m$, form a ``block'' in a \TopOr~$\T$. 
These blocks are a skeletal structure in any \TopOr.
There are~$m$ ``gaps'' between these blocks of $x$-vertices.
The construction ensures that these gaps are filled with $a$-vertices and,  moreover, the indegree and outdegree of all the $a$-vertices in a gap sum up to~$B$. Hence, these $m$ gaps require to partition the $a$-vertices into~$m$ sets where each them has in total in- and outdegree~$B$ and, thus, correspond to a solution for the \TPart instance where we reduce from.
In the reverse direction, for each triple in a solution of a \TPart instance the corresponding $a$-vertices will be used to fill up one gap.
See \autoref{fig:exampleRealization} for an example of the construction.
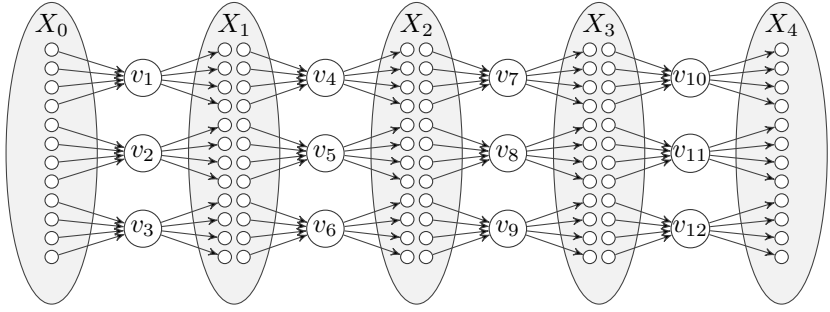
\begin{figure}[t]
	\begin{center}
		\def\layersep{2.4cm}
		\def\B{12}
		\begin{tikzpicture}[draw=black!75, scale=1,->,>=stealth']
			\tikzstyle{vertex}=[circle,draw=black!80,minimum size=14pt,inner sep=0pt]
			\tikzstyle{x-vertex}=[circle,draw=black!80,minimum size=5pt,inner sep=0pt,fill=white]

			\draw[-,fill=black!5] (0,-2) ellipse (0.6 cm and 2 cm);

			\foreach \y in {1,...,\B}{
				\pgfmathparse{int(\y+\B)}  \let\nr\pgfmathresult
				\node[x-vertex] (x-\nr) at (0,-\y * 0.25 - 0.375) {};
			}
			\node at (0 * \layersep +0.1,-0.3) {$X_0$};

			\foreach \x in {1,...,4}{
				\pgfmathparse{int(\x+1)}  \let\xx\pgfmathresult
				\draw[-,fill=black!5] (\layersep * \x,-2) ellipse (0.6 cm and 2 cm);
				\node at (\x * \layersep + 0.1,-0.3) {$X_\x$};
				\ifthenelse{\x<4}{
					\foreach \y in {1,...,\B}{
						\pgfmathparse{int(\y+\B*2*(\x))}  \let\nr\pgfmathresult
						\ifthenelse{\x=1}{
							\ifthenelse{\y=1}{
								\node[x-vertex]	(x-\nr) at (\x * \layersep - 3.5,-\y * 0.25 - 0.375) {};
							} {
								\node[x-vertex]	(x-\nr) at (\x * \layersep - 3.5,-\y * 0.25 - 0.375) {};
							}
						} {
							\node[x-vertex]	(x-\nr) at (\x * \layersep - 3.5,-\y * 0.25 - 0.375) {};
						}
					}
					\foreach \y in {1,...,\B}{
						\pgfmathparse{int(\y+\B*2*(\x)+\B)}  \let\nr\pgfmathresult
						\node[x-vertex]	(x-\nr) at (\x * \layersep + 3.5,-\y * 0.25 - 0.375) {};
					}
				}{
					\foreach \y in {1,...,\B}{
						\pgfmathparse{int(\y+\B*2*(\x))}  \let\nr\pgfmathresult
						\node[x-vertex]	(x-\nr) at (\x * \layersep,-\y * 0.25 - 0.375) {};
					}

				}
				\foreach \y in {1,...,3}{
					\pgfmathparse{int((\y-1)*4+1 +\B*2*(\x))}			\let\ya\pgfmathresult
					\pgfmathparse{int((\y)*4 +\B*2*(\x))}					\let\yb\pgfmathresult

					\pgfmathparse{int((\y-1)*4+1+\B*2*(\x-1)+\B)}		\let\za\pgfmathresult
					\pgfmathparse{int((\y)*4 +\B*2*(\x-1)+\B)}			\let\zb\pgfmathresult
			
					\pgfmathparse{int(\y+3*(\x-1))}	\let\nr\pgfmathresult
					\node[vertex] 	(a-\nr) at (\x * \layersep - 0.5*\layersep,-\y) {$v_{\nr}$};
					\foreach \i in {\ya,...,\yb}{
						\path (a-\nr) edge (x-\i);
					}
					\foreach \i in {\za,...,\zb}{
						\path (x-\i) edge (a-\nr);
					}
				}
			}
		\end{tikzpicture}
	\end{center}
	\caption{A schematic representation of a dag that realizes a degree sequence~$\S$ that is constructed from a \textsc{3-Partition} instance with $B=12$ and~$m=4$. There are five blocks marked by the gray ellipses and four gaps between them.
	In each gap there are three $a$-vertices, altogether having in- and outdegree~$B$. 
	The sets~$X_i$, $1\le i \le 3$, are partitioned into two parts of size~$B$. The vertices in the left part have $B$~ingoing arcs from the $a$-vertices that fill the gap between $X_{i-1}$ and~$X_i$. Correspondingly, the vertices in the right part have~$B$ outgoing arcs to the $a$-vertices that fill the gap between~$X_i$ and $X_{i+1}$.
	Consequently, the first and the last block~$X_0$ and~$X_4$ are of size~$B$.
	The in- and outdegree of the $a$-vertices in each triple sum up to~$B$.
	}
	\label{fig:exampleRealization}
\end{figure}


To achieve the mentioned skeletal structure of the subsequences~$X_0, \ldots, X_m$, we require the corresponding $x$-vertices to form a \emph{complete dag}: A dag with $n$~vertices and~$\binom{n}{2}$ arcs that realizes the degree sequence $\left\{ \binom{0}{n-1},\binom{1}{n-2},\ldots, \binom{n-1}{0}\right\}$. Observe that there is only one dag realizing such a sequence and, furthermore, such a complete dag admits only one \tOrd.

Now, we complete the reduction by defining the subsequences~$X_0, \ldots, X_m$.
As indicated in \autoref{fig:exampleRealization}, $X_0$~and~$X_m$ contain~$B$ elements and the other subsequences contain~$2B$ elements.
The subsequence~$X_0$ consists of the elements~$x_0^0$, $x_0^1$, $\ldots$, $x_0^{B-1}$. 
This subsequence corresponds to the $x$-vertices~$v_0^0,$ $\ldots,$ $v_0^{B-1}$ forming the first block in a realizing dag for $\S$. 
Remember that the $x$-vertices are supposed to form a complete dag. 
To achieve this, $v_0^j$ has $(B - 1 - j)$ outgoing arcs to~$v_0^{j+1}, \ldots, v_0^{B-1}$ and $(m-1)2B+B=(2m-1)B$ outgoing arcs to the $x$-vertices in the subsequent blocks.
Furthermore,~$v_0^j$ has~$j$ ingoing arcs from the $x$-vertices~$v_0^{0}, \ldots, v_0^{j-1}$.
Finally, each $x$-vertex in~$v_0^0, \ldots, v_0^{B-1}$ has one outgoing arc to one of the three subsequent $a$-vertices.
Hence, the corresponding $x$-element of~$v_0^j$ is as follows:
\begin{align*}
	x_0^j & :=
		\binom{j}{(B - 1 - j) + (2m-1)B + 1} = 
		\binom{j}{2mB - j}.
\end{align*}
Analogously, the subsequence~$X_m$ consists of~$B$ elements~$x_m^0, x_m^1, \ldots, x_m^{B-1}$ defined as follows:
\begin{align*}
	x_m^j & :=
		\binom{(2m-1)B + j + 1}{B - 1 - j}.
\end{align*}
For~$0<i<m$, the subsequence~$X_i$ consists of~$2B$ elements~$x_i^0, x_i^1, \ldots, x_i^{2B-1}$. 
Let~$v_i^0, \ldots, v_i^{2B-1}$ denote the corresponding $x$-vertices. 
Then, $v_i^j$ has~$(i-1)2B+B = (2i-1)B$ ingoing arcs from the~$x$-vertices in the preceding blocks and $j$~ingoing arcs from~$v_i^0, \ldots, v_i^{j-1}$.
Furthermore, $v_i^j$ has~$(m-i-1)2B + B = (2m-2i-1)B$ outgoing arcs to the subsequent blocks and~$2B-1-j$ outgoing arcs to~$v_i^{j+1}, \ldots, v_i^{B-1}$.
Finally, if~$j < B$, then $v_i^j$ has an ingoing arc from one of the three preceding $a$-vertices.
Otherwise, if~$j \ge B$, then $v_i^j$ has an outgoing arc to one of the three subsequent $a$-vertices.
Hence, the corresponding $x$-element of~$v_i^j$ is as follows:
\begin{align*}
	x_i^j & := 
			\binom{(2i-1)B + j + 1}{(2m-2i+1)B -1-j}		  & \text{ if } j < B, \\
	x_i^j & := 
			\binom{(2i-1)B + j}{(2m-2i+1)B -j}   & \text{ if } j \ge B.
\end{align*}
Observe that the strong NP-hardness of \TPart is essential to prove the polynomial running time of the reduction:
The size of the constructed \DAGR instance is upper-bounded by a polynomial in the values of the integers in~$\A$.
Since \TPart is strongly NP-hard, 
it remains NP-hard when the values of the integers in~$\A$ are bounded by a polynomial in the input size.
Hence, the size of the \DAGR instance is polynomially bounded in the size of the \TPart instance. Clearly, the construction can be computed in polynomial time.

\paragraph{Correctness.} In the following, we prove the correctness of the construction given above. Therefore, throughout this subsection let $(\A,B)$ be an instance of \TPart and let $\S$ be the corresponding degree sequence formed by the construction above.


\begin{lemma} \label{lem:hinrichtung}
	If $(\A,B)$ is a yes-instance of \TPart{}, then~$\S$ is a yes-instance of \DAGR{}.
\end{lemma}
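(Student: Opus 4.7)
The plan is to explicitly construct a realizing dag $D$ for $\S$ guided by the triples of the \TPart solution, following the schema of \autoref{fig:exampleRealization}. The yes-assumption gives disjoint triples $\mathcal{T}_1, \dots, \mathcal{T}_m$ of $a$-elements with $\sum_{\alpha_j \in \mathcal{T}_i} a_j = B$ for each $i$, and I assign $\mathcal{T}_i$ to the $i$-th ``gap'' between the consecutive blocks $X_{i-1}$ and $X_i$.

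Concretely, I linearly order the vertices as $v_0^0, \dots, v_0^{B-1}$, then the three vertices realizing $\mathcal{T}_1$, then $v_1^0, \dots, v_1^{2B-1}$, then the three vertices realizing $\mathcal{T}_2$, and so on, ending with the three vertices of $\mathcal{T}_m$ followed by $v_m^0, \dots, v_m^{B-1}$. I add every arc respecting this order according to three rules: (i) inside each block $X_i$, every arc $(v_i^j, v_i^{j'})$ with $j < j'$, making $X_i$ a complete dag; (ii) between any two blocks $X_i, X_{i'}$ with $i < i'$, every arc $(v_i^j, v_{i'}^{j'})$; (iii) at gap $i$, I partition the $B$ ``tail'' $x$-vertices of the preceding block (all of $X_0$ when $i = 1$, otherwise $v_{i-1}^B, \dots, v_{i-1}^{2B-1}$) into three subsets whose sizes equal the three $a$-values in $\mathcal{T}_i$, and connect each subset to the corresponding $a$-vertex; symmetrically, I partition the $B$ ``head'' $x$-vertices of the succeeding block and send the outgoing arcs from $\mathcal{T}_i$ accordingly.

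All arcs respect the chosen linear order, so $D$ is acyclic, loop-free, and free of parallel arcs. It remains to verify that the in- and outdegrees match $\S$. The cumulative block sizes $\sum_{i' < i} |X_{i'}| = (2i-1)B$ and $\sum_{i' > i} |X_{i'}| = (2m-2i-1)B$ (for $0 < i < m$) supply the block-bulk contributions to every $v_i^j$; combining them with the $j$ within-block ingoing arcs, the $2B-1-j$ within-block outgoing arcs, and the single arc to the adjacent gap (present exactly when $j < B$ on the ingoing and $j \ge B$ on the outgoing side) reproduces the formulas defining $x_i^j$. An analogous but simpler calculation covers the boundary blocks $X_0$ and $X_m$. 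Each $a$-vertex assigned to $\alpha_j$ receives and sends exactly $a_j$ arcs by the choice of the subset sizes, thus realizing $\binom{a_j}{a_j}$.

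The main obstacle is purely bookkeeping: the construction has been tailored so that every degree formula balances exactly, but the boundary cases $i \in \{0, m\}$ and the threshold $j = B$ that separates the ingoing from the outgoing ``gap-slots'' in the middle blocks must be treated carefully.
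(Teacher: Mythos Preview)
Your proposal is correct and follows essentially the same approach as the paper's proof: both explicitly construct the realizing dag by making the $x$-vertices a complete dag, placing each triple of $a$-vertices in the corresponding gap, and wiring the single gap-arcs between $x$- and $a$-vertices so that the degree formulas for $x_i^j$ and $\alpha_j$ are matched; the verification of degrees and of acyclicity (via the explicit linear order) is the same bookkeeping in both cases.
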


\begin{proof} 
	We prove that if \TPart is a yes-instance, then there exists a realizing dag for~$\S$ as described above and pictured in \autoref{fig:exampleRealization}.

 	Let~$\pi$ be a permutation of the sequence~$\A$ such that~$a_{\pi(3i+1)} + a_{\pi(3i+2)} + a_{\pi(3i+3)} = B$ for all~$0 \leq i < m$.
 	Since~$(\A,B)$ is a yes-instance of \TPart{} such a permutation exists.
	We now construct a realizing dag~$D = (V,A)$.
	The degree sequence~$\S$ and, hence, a realizing dag~$D$ consists of~$|V| = B+(m-1)2B+B+3m = 2mB+3m$ vertices.
	We group~$V$ into~$2m+1$ disjoint vertex sets~$V = V^b_0 \cup V^b_1 \cup \ldots \cup V^b_m \cup V^t_1 \cup \ldots \cup V^t_m$ with~$V^b_i = \{v_i^0, v_i^1, \ldots, v_i^{2B-1}\}$ for all~$1 \leq i < m$ and $V^t_j = \{u_{\pi(3j+1)}$, $u_{\pi(3j+2)}, u_{\pi(3j+3)}\}$ for all~$1 \leq j \le m$.
	The first set is $V^b_0 = \{v_0^0, v_0^1, \ldots, v_0^{B-1}\}$ and~$V^b_m$ contains the last~$B$ vertices~$v_m^0, \ldots, v_m^{B-1}$.

	Each vertex~$v_i^j$ realizes the $x$-element~$x_i^j$.
	Each vertex~$u_i$ realizes the $a$-element~$\alpha_i$.
	The vertex sets~$V^b_i$ form the blocks denoted by the ellipses in \autoref{fig:exampleRealization}.
	The vertex sets~$V^t_j$ correspond to the triples of $a$-vertices filling the gaps between the blocks.
	By construction the indegrees and also the outdegrees of the vertices in each~$V^t_j$ add up to~$B$.

	We now describe how the vertices are connected with arcs:
	The $x$-vertex~$v_i^j$ has an outgoing arc to every vertex of~$V_\ell^b$, $\ell > i$,
	and an outgoing arc to all the ``following'' vertices in his block, that is, the $x$-vertices~$v_i^\ell$ with~$\ell > j$.
	If~$0 \leq j \leq B-1$ and~$0 < i \leq m$, then~$v_i^j$ has one ingoing arc from one of the $a$-vertices of~$V^b_{i-1}$.
	If~$B \leq j \leq 2B-1$ and~$0 < i < m$ or~$0 \leq j \leq B-1$ and~$i=0$, then~$v_i^j$ has one outgoing arc to one of the $a$-vertices of~$V^b_{i}$.
	Since the sum of the indegrees and the sum of the outdegrees in each vertex set~$V^t_j$ adds up to~$B$, the arcs between $a$-vertices and $x$-vertices can be set such that each $a$-vertex~$u_i$ has~$a_i$ ingoing and outgoing arcs.
	This completes the description of~$D$. 
	Clearly,~$D$ is a dag. 
	Hence, it remains to show that~$D$ realizes~$\S$.

	The indegree of~$v_i^j$, $1 \le i <  m$, is as follows:~$v_i^j$ has ingoing arcs from the~$2B(i-1) + B = (2i-1)B$ vertices realizing the elements in~$X_0, X_1, \ldots, X_{i-1}$, from the $j$~vertices~$v_i^0, \ldots, v_i^{j-1}$, and from one $a$-vertex in~$V_{i-1}$ if~$0 \leq j < B$.
	Altogether, this gives an indegree of~$(2i-1)B + j + 1$ if~$0 < j < B$ or~$(2i-1)B + j$ if~$B \leq j < 2B$.

	The outdegree of~$v_i^j$, $1 \le i <  m$, is as follows:~$v_i^j$ has outgoing arcs to the~$2B(m-i-1) + B = (2m-2i-1)B$ vertices realizing the elements in~$X_{i+1}, X_{i+2}, \ldots, X_{m}$, to the~$2B-1-j$ vertices~$v_i^{j+1}, \ldots, v_i^{2B-1}$, and to one $a$-vertex in~$V^t_{i}$ if~$2B > j \geq B$.
	Altogether this gives an outdegree of~$(2m-2i+1)B-j-1$ if~$0 \le j < B$ or~$(2m-2i+1)B-j$ if~$B \le j < 2B$.
	Hence the $x$-vertex~$v_i^j$ fulfills the degree constraints of the $x$-element~$x_i^j$ (special cases of~$i \in \{0,m\}$ follow analogously).

	Each $x$-vertex of~$\{v_{i+1}^0, \ldots, v_{i+1}^{B-1}\}$ has one ingoing arc from one of the $a$-vertices~$u_{\pi(3i+1)},u_{\pi(3i+2)},u_{\pi(3i+3)}$ of~$V^t_i$.
	Hence, the total number of outgoing arcs of~$u_{\pi(3i+1)},u_{\pi(3i+2)}$, and~$u_{\pi(3i+3)}$ is~$B$.
	Each $x$-vertex of~$\{v_{i}^{B}, \ldots, v_{i}^{2B-1}\}$ has one outgoing arc to one of the $a$-vertices~$u_{\pi(3i+1)},u_{\pi(3i+2)},u_{\pi(3i+3)}$ of~$V^t_i$.
	Hence, the total number of ingoing arcs of~$u_{\pi(3i+1)},u_{\pi(3i+2)}$ and~$u_{\pi(3i+3)}$ is~$B$.
	Since~$a_{\pi(3i+1)}+a_{\pi(3i+2)}+a_{\pi(3i+3)} = B$, the $a$-vertices~$u_{\pi(3i+1)},u_{\pi(3i+2)}$, and $u_{\pi(3i+3)}$ fulfill the degree constraints of~$\alpha_{\pi(3i+1)}, \alpha_{\pi(3i+2)}$, and~$\alpha_{\pi(3i+3)}$.

	Overall, each $a$-vertex~$u_i$ has indegree and outdegree equal to~$a_i$ and each vertex~$v_i^j$ fulfills the degree constraints of~$x_i^j$.
\end{proof}
%
\looseness=-1 To show the reverse direction, we first need some observations. 

\begin{observation} \label{obs:aVerticesIS}
	In any dag~$D$ realizing~$\S$, the $a$-vertices form an independent set and the $x$-vertices form a complete dag.
\end{observation}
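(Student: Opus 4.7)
My plan is to establish both parts of the observation simultaneously by a tight arc-incidence count. First I would verify, case by case on the definition of $x_i^j$ (namely $i = 0$, $i = m$, and $0 < i < m$ split into $j < B$ and $j \geq B$), that every $x$-element has total degree $d^-(x_i^j) + d^+(x_i^j) = 2mB$; in each case the two coordinates obviously sum to $2mB$ after cancellation. I would also record that $\S$ contains exactly $|X_0| + \sum_{0<i<m}|X_i| + |X_m| = B + (m-1)\cdot 2B + B = 2mB$ many $x$-elements, and that the total degree summed over all $a$-elements is $\sum_{i=1}^{3m} 2a_i = 2mB$, using the \TPart hypothesis $\sum a_i = mB$.

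Now let $D$ be any realizing dag. Since $D$ is simple and acyclic, between any two distinct vertices there is at most one arc. Hence every $x$-vertex $v$ can account for at most $2mB - 1$ of its arc-endpoints through arcs to the other $2mB - 1$ $x$-vertices, so at least $2mB - (2mB - 1) = 1$ of its $2mB$ arc-endpoints must sit on an arc whose other endpoint is an $a$-vertex. Summing this lower bound over all $2mB$ $x$-vertices yields at least $2mB$ arcs of $D$ with exactly one $x$-endpoint and one $a$-endpoint.

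On the $a$-side, the total of $d^-(u) + d^+(u)$ over all $a$-vertices $u$ is exactly $2mB$; this quantity counts each $aa$-arc twice and each $xa$-arc once, so the number of $xa$-arcs is at most $2mB$, and this upper bound forces equality with the previous lower bound. Consequently the number of $aa$-arcs is $0$, so the $a$-vertices form an independent set, and equality in the per-vertex bound implies that every $x$-vertex is adjacent to all other $2mB - 1$ $x$-vertices, giving the $\binom{2mB}{2}$ arcs of a complete dag. The only point requiring mild care is the uniform computation that every one of the four cases in the definition of $x_i^j$ really yields degree sum $2mB$; after that, the argument is a clean pigeonhole with no case analysis on $D$ itself.
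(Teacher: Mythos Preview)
Your argument is correct and is essentially the same tight arc-counting as the paper's proof: both show that the $x$-side demands at least as many $xa$-arcs as the $a$-side can supply, forcing equality and hence no $aa$-arcs and a complete $x$-dag. Your packaging via the per-vertex observation $d^-(x_i^j)+d^+(x_i^j)=2mB=\#\{x\text{-vertices}\}$ is slightly cleaner than the paper's global computation of $\sum d^-(x_i^j)=2m^2B^2$ compared against $\binom{2mB}{2}$, but the underlying count is identical.
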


\begin{proof}
	The number~$d^-(X)$ of ingoing arcs to all $x$-vertices is:
	\begin{align*}
		d^-(X) {} = & \sum_{j=0}^{B-1} d^-(x_0^j) + \sum_{i=1}^{m-1}\sum_{j=0}^{B-1} d^-(x_i^j) + \sum_{i=1}^{m-1}\sum_{j=B}^{2B-1} d^-(x_i^j) + \sum_{j=0}^{B-1} d^-(x_m^i)\\
		= {} & \sum_{j=0}^{B-1} j  + \sum_{i=1}^{m-1} \sum_{j=0}^{B-1} ( (2i-1)B + j + 1 )  \\  & + \sum_{i=1}^{m-1} \sum_{j=B}^{2B-1} ( (2i-1)B + j ) + \sum_{j=0}^{B-1} ((2m-1)B + j + 1)\\
		= {} & 2 m^2 B^2. \\
	\end{align*}
	Note that~$d^-(X)$ is equal to the number~$d^+(X)$ of outgoing arcs from all $x$-vertices.
	The number of $a$-vertices is~$3m$ and the number of $x$-vertices is~$2mB$.
	Hence, the number~$\xi$ of arcs connecting two $x$-vertices is at most:
	\begin{equation*}
		\xi  = \frac{1}{2} 2mB(2mB-1) = 2m^2B^2 - mB.
	\end{equation*}
	As a consequence, there are at least~$d^-(X) - \xi = mB$ arcs going from an $a$-vertex to an $x$-vertex.
	Since~$mB = \sum_{i=1}^{3m} a_i$ is the number of outgoing arcs from the $a$-vertices, all outgoing arcs from $a$-vertices go to $x$-vertices.
	Thus, in any realizing dag $D$ the $a$-vertices form an independent set and the number of arcs that connect two $x$-vertices is exactly~$\xi$.
	Hence, the $x$-vertices form a clique in the underlying undirected graph.
\end{proof}
The next observation shows that for a realizable degree sequence~$\S$ there exists a realization~$D$ with a \tOrd of the vertices such that the $x$-vertices are ordered as follows:
$$x_0^0, x_0^1, \ldots, x_0^{B-1}, x_1^0, \ldots, x_1^{2B-1}, x_2^0, \ldots, x_2^{2B-1}, x_3^0, \ldots, x_{m-1}^{2B-1}, x_m^0, \ldots, x_m^{B-1}.$$


\begin{observation} \label{obs:xVerticesTopOr}
	If~$\S$ is realizable, then there exists a \TopOr~$\T$ such that in~$\T$ for all~$i < j$ the vertex realizing~$x_\ell^i$ is ahead of the vertex realizing~$x_\ell^j$ and for all $0 \le h < k \le m$ the vertices realizing elements of~$X_h$ are ahead of the vertices realizing elements of~$X_k$.
\end{observation}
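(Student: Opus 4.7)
The plan is to apply \autoref{obs:oppOrderTopoplogical} after verifying that the $x$-elements line up under the opposed order $\leq_{\rm{opp}}$ in precisely the sequence demanded, strictly so at every pair of distinct elements. A convenient preliminary observation is that every $x$-element has total degree equal to $2mB$ (this is immediate from the four defining formulas): hence strict inequality in indegree automatically yields the reverse strict inequality in outdegree, i.e.\ strict $\leq_{\rm{opp}}$, so I only need to track one coordinate.

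First I would analyze the comparisons inside a single block $X_\ell$. Using the piecewise definition of $x_\ell^j$, the indegree takes the value $j$ for $\ell = 0$, the value $(2\ell-1)B + j + 1$ for $j < B$ and $(2\ell-1)B + j$ for $j \ge B$ when $0 < \ell < m$, and $(2m-1)B + j + 1$ for $\ell = m$. Inspection shows that the indegree is strictly monotone in $j$ with a single degeneracy: inside a middle block the pair $x_\ell^{B-1}$ and $x_\ell^{B}$ coincide, both equal to $\binom{2\ell B}{(2m-2\ell)B}$. Then I would handle between-block comparisons by checking that the largest indegree in $X_h$ is strictly smaller than the smallest indegree in $X_k$ whenever $h < k$; for interior blocks this reduces to $(2h+1)B - 1 < (2k-1)B + 1$, which holds since $k \ge h+1$, and the boundary cases $h = 0$ and $k = m$ are analogous. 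So every pair of $x$-elements taken from distinct blocks is strictly $\leq_{\rm{opp}}$-comparable in the required direction, and likewise every intra-block pair except the lone duplicate.

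Applying \autoref{obs:oppOrderTopoplogical} to a realization of $\S$ then yields a \TopOr\ $\T$ respecting every strict $\leq_{\rm{opp}}$ relation among the $x$-elements. This gives both required properties at once, except for the pair $(x_\ell^{B-1}, x_\ell^{B})$ in each middle block, on which the lemma imposes no constraint. I would close this gap by relabeling: since these two $x$-elements are identical as degree pairs, swapping the names of the two corresponding consecutive $x$-vertices in $\T$ does not alter the realization and forces $x_\ell^{B-1}$ to precede $x_\ell^{B}$ as required. I expect the one mildly delicate point to be the bookkeeping at the $j = B$ boundary inside middle blocks, where the formula for $x_\ell^{j}$ changes; one must verify that the apparent jump cancels to leave both coordinates invariant, producing the harmless equality rather than a violation of monotonicity.
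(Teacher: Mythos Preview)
Your proof is correct and takes the same approach as the paper: verify that the $x$-elements are linearly ordered by $\leq_{\rm opp}$ and invoke \autoref{obs:oppOrderTopoplogical}. Your total-degree shortcut (all $x$-elements have indegree plus outdegree equal to $2mB$) and your explicit handling of the coincidence $x_\ell^{B-1}=x_\ell^{B}$ via relabeling are in fact slightly tidier than the paper's case analysis, which in the middle blocks reaches only the non-strict bounds $i+1-j\le 0$ and $j-1-i\ge 0$ and silently absorbs that equality.
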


\noindent\textit{Proof.}
	We first show that for all~$i < j$ the vertex realizing~$x_\ell^i$ is ahead of the vertex realizing~$x_\ell^j$:
	Let~$i$ and~$j$ be two integers with~$i < j$.
	By \autoref{obs:oppOrderTopoplogical} it suffices to show that~$x_\ell^i \leq_{\rm{opp}} x_\ell^j$, for $0 \leq \ell \leq m$.
	That is, it suffices to show~$d^-(x_\ell^i) - d^-(x_\ell^j) \leq 0$ and~$d^+(x_\ell^i) - d^+(x_\ell^j) \geq 0$, where~$d^-(x_\ell^i)$ ($d^+(x_\ell^i)$) is the indegree (outdegree) of the $x$-vertex realizing~$x_\ell^i$.
	This is shown in the following case distinction:
	\begin{description}
		\item[Case~$\ell = 0$ ($0 \leq i < j \leq B-1$):]
			\begin{align*}
				d^-(x_0^i) - d^-(x_0^j) 	& = i - j < 0 \\
				d^+(x_0^i) - d^+(x_0^j)
							& = 2mB - i - 2mB+j \\
							& = j-i > 0
			\end{align*}
		\item[Case~$0 < \ell < m$ ($0 \leq i < j \leq 2B-1$):]
			\begin{align*}
				d^-(x_\ell^i) - d^-(x_\ell^j)
						\le {} & (2\ell-1)B+i+1 - ((2\ell-1)B+j) \\
						= {} & i+1-j 
						 \leq 0 \\
				d^+(x_\ell^i) - d^+(x_\ell^j)
						\ge {} & (2m-2\ell+1) - i - 1 - ((2m-2\ell+1) - j) \\
						= {} & j - 1 - i  \geq 0
			\end{align*}
		\item[Case~$\ell = m$ ($0 \leq i < j \leq B-1$):]
			\begin{align*}
				d^-(x_m^i) - d^-(x_m^j) 	& = (2m-1)B + i + 1 - ((2m-1)B+j+1) \\
													& = i-j < 0 \\
				d^+(x_m^i) - d^+(x_m^j)		& = B - 1 - i - (B - 1 -j) \\
													& = j - i > 0   
			\end{align*}
	\end{description}
%
%
	We now show the second part: for all $0 \le h < k \le m$ it holds that in~$\T$ the vertices realizing elements of~$X_h$ are ahead of the vertices realizing elements of~$X_k$.
	By \autoref{obs:oppOrderTopoplogical} and transitivity of~$\leq_{\rm{opp}}$ it remains to show that (1)~$x_0^{B-1} \leq_{\rm{opp}} x_{1}^0$ and (2)~$x_\ell^{2B-1} \leq_{\rm{opp}} x_{\ell+1}^0$ for all~$0<\ell<m$:
	\begin{description}
		\item[(1):]
			\begin{align*}
				d^-(x_{0}^{B-1}) - d^-(x_{1}^{0})
					= {}& B-1 - ((2-1)B+1) \\
					= {}& -2 < 0 \\
				d^+(x_{0}^{B-1}) - d^+(x_{1}^{0})
					= {}& 2mB - (B-1) - ((2m-2+1)B-1)  \\
					= {}& 2 > 0
			\end{align*}
		\item[(2):]
			\begin{align*}
				d^-(x_{\ell}^{2B-1}) - d^-(x_{\ell+1}^{0})
					= {}& (2\ell - 1)B + 2B-1 - ((2(\ell+1)-1)B + 1)  \\
					= {}& -2 < 0 \\
				d^+(x_{\ell}^{2B-1}) - d^+(x_{\ell+1}^{0})
					= {} & (2m-2\ell+1)B - 2B+1 \\ & - ((2m-2(\ell+1)+1)B-1)\\
					= {} & 2 > 0   \tag*{\qed}
			\end{align*}
	\end{description}
%
%
%
With \autoref{obs:aVerticesIS} and~\ref{obs:xVerticesTopOr} we can prove the next lemma, which completes the proof of the correctness of our reduction.

\begin{lemma}\label{lem:rueckrichtung}
	If~$\S$ is a yes-instance of \DAGR{}, then $(\A,B)$ is a yes-instance of \TPart{}.
\end{lemma}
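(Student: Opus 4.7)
The plan is to derive the structure depicted in \autoref{fig:exampleRealization} from any realization $D$ of $\S$ by counting arcs across carefully chosen cuts of a canonical topological ordering, and then to read the triples of a \TPart{} solution off the resulting positions of the $a$-vertices.

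By \autoref{obs:aVerticesIS} the $a$-vertices form an independent set and the $x$-vertices a complete dag, and by \autoref{obs:xVerticesTopOr} we may fix a topological ordering $\phi$ of $D$ in which the $x$-vertices appear in the canonical order $x_0^0,\ldots,x_0^{B-1},x_1^0,\ldots,x_m^{B-1}$. Since the $x$-vertices span a complete sub-dag, each $x$-vertex already receives an arc from every preceding $x$-vertex and sends an arc to every subsequent $x$-vertex in~$\phi$; comparing these numbers with the prescribed in- and out-degrees shows that every vertex of $X_0$ and of the second half of each $X_i$ with $0<i<m$ has one additional outgoing arc that must go to an $a$-vertex (an \emph{outgoing slot}), whereas every vertex of $X_m$ and of the first half of each $X_i$ with $0<i<m$ has one additional ingoing arc that must come from an $a$-vertex (an \emph{ingoing slot}). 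There are $mB$ slots of each kind, and along $\phi$ they alternate in blocks of size $B$: outgoing at $X_0$ and at the second halves, ingoing at the first halves and at $X_m$.

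For $0\le p\le 2mB$ let $\Sigma(p)$ and $T(p)$ denote the numbers of outgoing and ingoing slots among the first $p$ $x$-vertices of $\phi$, and let $S(p)$ be the sum of the $a$-values of the $a$-vertices appearing before the $(p+1)$-th $x$-vertex in $\phi$. I would then prove the sandwich
\[
T(p)\le S(p)\le \Sigma(p),\qquad 0\le p\le 2mB,
\]
by two topological-ordering observations: (i) every ingoing slot of one of the first $p$ $x$-vertices must be fed by an $a$-vertex that is even earlier in $\phi$ and is therefore counted in $S(p)$, giving $T(p)\le S(p)$; and (ii) the total indegree $S(p)$ of those $a$-vertices can only be supplied by outgoing slots of still earlier $x$-vertices, which are all among the first $p$, giving $S(p)\le \Sigma(p)$.

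The alternating slot pattern yields $\Sigma(2iB)=T(2iB)=iB$ for every $0\le i\le m$, so the sandwich forces $S(2iB)=iB$. Taking consecutive differences, the $a$-vertices whose positions in $\phi$ lie strictly between the $(2(i-1)B)$-th and the $(2iB)$-th $x$-vertex have $a$-values summing to exactly $B$, for each $i\in\{1,\ldots,m\}$. Since $B/4<a_j<B/2$ for every $j$, two or fewer such values sum to less than $B$ and four or more sum to more than $B$, so each of the $m$ groups contains exactly three elements; these $m$ triples form a solution to the \TPart{} instance $(\A,B)$. The main obstacle is the sandwich itself, and within it the argument that every ingoing slot among the first $p$ $x$-vertices is actually filled by an $a$-vertex counted in $S(p)$; once this is in place, the remainder is routine degree arithmetic using the alternating slot pattern.
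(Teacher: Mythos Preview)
Your proposal is correct and follows essentially the same approach as the paper: both arguments invoke \autoref{obs:aVerticesIS} and \autoref{obs:xVerticesTopOr}, identify the single ``extra'' in- and out-arcs of the $x$-vertices (your slots), and then count arcs across the cuts separating consecutive blocks to force the $a$-values between cuts to sum to~$B$, whence the $B/4<a_j<B/2$ constraint yields triples. The only difference is presentational: the paper argues gap by gap with repeated ``we can assume'' reorderings of~$\phi$, whereas your cumulative sandwich $T(p)\le S(p)\le \Sigma(p)$ together with $\Sigma(2iB)=T(2iB)=iB$ does the same counting globally and in one stroke---arguably a cleaner formalization of the same idea (just note the harmless edge case $p=2mB$, where ``before the $(2mB{+}1)$-th $x$-vertex'' should be read as ``anywhere'', which is fine since an $a$-vertex after the last $x$-vertex would have unsatisfiable positive outdegree).
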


\begin{proof}
	Let~$D = (V,A)$ be the realization of~$\S$ with a \tOrd~$\T$.
	Let~$v_i^j$ be the $x$-vertex realizing~$x_i^j$ and let~$u_i$ be the $a$-vertex realizing~$a_i$.
	Furthermore, $\pos_\T(v)$ denotes the position of~$v$ in the \tOrd~$\T$.
	Since~$\S$ is a yes-instance, we can assume by \autoref{obs:xVerticesTopOr} that~$\pos_\T(v_i^j) < \pos_\T(v_i^\ell)$ for~$j<\ell$ and that~$\pos_\T(v_i^j) < \pos_\T(v_{k}^\ell)$ for~$i<k$.

	From \autoref{obs:aVerticesIS} it follows that none of the $x$-vertices~$v_0^0, v_0^1, \ldots, v_0^{B-1}$ has an ingoing arc from an $a$-vertex, but each has one outgoing arc to an $a$-vertex.
	Hence, we can assume that~$\pos_\T(u_i) > \Phi(v_0^{B-1})$ for all~$1\leq i \leq 3m$.
	Observe that each $x$-vertex~$v_1^0, v_1^1, \ldots, v_1^{B-1}$ has one ingoing arc from an $a$-vertex and no outgoing arc to an $a$-vertex.
	Hence, we can assume that there are $a$-vertices~$u_{i_1}, u_{i_2}, \ldots, u_{i_\ell}$ with~$\pos_\T(v_0^{B-1}) < \pos_\T(u_{i_j}) < \pos_\T(v_1^0)$ and~$\sum_{j=1}^\ell a_{i_j} = B$.
	Since~$B/4 < a_j < B/2$ for all~$1\leq j \leq 3m$, it follows that~$\ell = 3$.

	The vertices~$v_1^B, \ldots, v_1^{2B-1}$ also have no ingoing arc from an $a$-vertex but each of these vertices has an outgoing arc to an $a$-vertex.
	Also, each of the vertices~$v_2^0, \ldots, v_2^{B-1}$ needs one ingoing arc from an $a$-vertex.
	So, again, we can assume that in the \tOrd~$\T$ of~$D$ there are three $a$-vertices between~$v_1^{2B-1}$ and~$v_2^0$ such that their indegrees and also their outdegrees sum up to~$B$.
	Analogously, it follows for all~$1\leq i < m$ that there are three $a$-vertices~$u_{j_1^i}, u_{j_2^i}, u_{j_3^i}$ with~$\pos_\T(v_i^{2B-1}) < \pos_\T(u_{j_1^i})< \pos_\T(u_{j_2^i})< \pos_\T(u_{j_3^i}) < \pos_\T(v_{i+1}^0)$ and~$\sum_{\ell = 1}^3 a_{j_\ell^i} = B$.
	Hence, $(\A,B)$ is a yes-instance of \TPart{}.
\end{proof}
Our construction together with \autoref{lem:hinrichtung} and \autoref{lem:rueckrichtung} yields the NP-hardness of \DAGR{}.
Containment in NP is easy to see: Guessing an~$n$-vertex dag and checking whether or not it is a realization for~$\S$ is clearly doable in polynomial time.
Hence, we arrive at the following theorem.

\begin{theorem}
	\DAGR{} is NP-complete.
\end{theorem}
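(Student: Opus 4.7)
The plan is to combine the three pieces already assembled earlier in the section: the polynomial-time construction from \TPart{} to \DAGR{}, \autoref{lem:hinrichtung} (yes-instances of \TPart{} give realizable degree sequences), and \autoref{lem:rueckrichtung} (realizability of~$\S$ forces a 3\dash Partition of~$\A$). Together these establish a polynomial-time many-to-one reduction from \TPart{} to \DAGR{}, which witnesses NP-hardness since \TPart{} is NP-hard.

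First I would verify containment in NP. Given a candidate certificate consisting of an $n$\dash vertex directed graph together with a bijection between its vertices and~$\S$, one can check in polynomial time that (i)~there are no self-loops or parallel arcs, (ii)~the graph is acyclic (via a single topological sort), and (iii)~each vertex~$v_i$ has indegree~$a_i$ and outdegree~$b_i$. Hence \DAGR{} lies in NP.

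Next I would combine the lemmas to conclude NP-hardness. Given an instance~$(\A,B)$ of \TPart{}, apply the construction from this section to obtain the degree sequence~$\S$. \autoref{lem:hinrichtung} shows that a solution to $(\A,B)$ yields a realizing dag for~$\S$, and \autoref{lem:rueckrichtung} shows the converse; thus $(\A,B)$ is a yes-instance of \TPart{} if and only if~$\S$ is a yes-instance of \DAGR{}.

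The one point that needs a brief remark (already observed after the construction) is the running time of the reduction. The subsequences~$X_0,\ldots,X_m$ each contain only~$O(B)$ elements with coordinates bounded by~$O(mB)$, so the size of~$\S$ is polynomial in~$m$ and~$B$, not in $\log B$. Because \TPart{} is strongly NP-hard~\cite{GJ79}, we may assume $B$~is bounded by a polynomial in the input size of $(\A,B)$, so the reduction runs in polynomial time overall. Combined with NP-membership, this yields NP-completeness. There is no substantial new obstacle at this stage; the real work was done in establishing \autoref{lem:hinrichtung} and \autoref{lem:rueckrichtung}.
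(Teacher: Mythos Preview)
Your proposal is correct and matches the paper's own proof essentially line for line: the paper also derives NP-hardness directly from the construction together with \autoref{lem:hinrichtung} and \autoref{lem:rueckrichtung}, and handles NP-membership by guessing a dag and verifying degrees and acyclicity in polynomial time. Your remark about strong NP-hardness ensuring the reduction is polynomial is likewise exactly the observation the paper makes after presenting the construction.
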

%
%
\citet{BM11} gave an polynomial-time algorithm for \DAGR{} if the degree sequence can be ordered with respect to the opposed order.
Hence, one may search for other polynomial-time solvable special cases.
One way to identify such special cases is to have a closer look on NP-hardness proofs and to check whether certain ``quantities'' need to be unbounded in order to make the proof (many-to-one reduction) work \cite{KNU10,Nie10}.
In our NP-hardness proof the maximum degree~$\Delta$ is unbounded.
We show in the next section that \DAGR is polynomial-time solvable for constant maximum degree. Indeed, we can even show fixed-parameter tractability with respect to the parameter~$\Delta$.

\section{Fixed-Parameter Tractability} \label{sec:fpt}
\newcommand{\w}{\omega}
Denoting the maximum degree in a degree sequence by $\Delta$, in this section we show that \DAGR is fixed-parameter tractable with respect to the parameter~$\Delta$.
To describe the basic idea that our fixed-parameter algorithm is based on, we need the following definition.
\begin{definition}\label{def:potential}
	Let $\T=v_1,v_2,\ldots,v_n$ be a \tOrd for a dag~$D$. For all $1\le i\le n$, the \emph{potential} at position~$i$  is a vector $p^\T_i\in \N^\Delta$ where $p^\T_i[l]$ for $1\le l\le \Delta$ is the number of vertices in the subsequence $v_1,\ldots,v_i$ that have in~$D$ at least~$l$ neighbors in the subsequence $v_{i+1},\ldots,v_n$. The \emph{value} of the potential~$p^\T_i$ is $\w(p^\T_i):=\sum_{l=1}^{\Delta}p^\T_i[l]$. 
\end{definition}
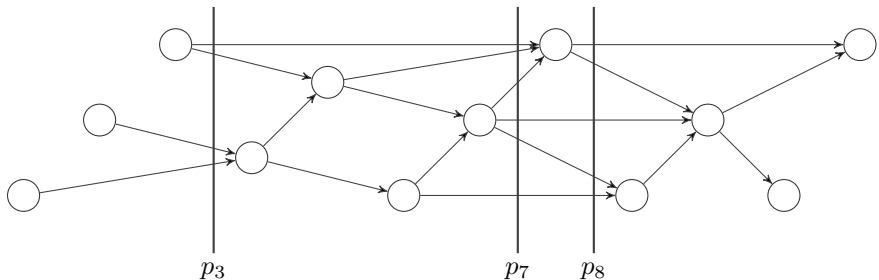
\begin{figure}[t]
	\begin{center}
		\def\layersep{1.5cm}
		\def\layersepp{1cm}
		\begin{tikzpicture}[draw=black!75, scale=1,->,>=stealth']
			\tikzstyle{vertex}=[circle,draw=black!80,minimum size=12pt,inner sep=0pt]

			\node[vertex] (V1) at (0,0) {};
			\node[vertex] (V2) at (1,1) {};
			\node[vertex] (V3) at (2,2) {};
			\node[vertex] (V4) at (3,0.5) {};
			\node[vertex] (V5) at (4,1.5) {};
			\node[vertex] (V6) at (5,0) {};
			\node[vertex] (V7) at (6,1) {};
			\node[vertex] (V8) at (7,2) {};
			\node[vertex] (V9) at (8,0) {};
			\node[vertex] (V10) at (9,1) {};
			\node[vertex] (V11) at (10,0) {};
			\node[vertex] (V12) at (11,2) {};

			\path (V1) edge (V4);
			\path (V2) edge (V4);
			\path (V3) edge (V5);
			\path (V3) edge (V8);
			\path (V4) edge (V5);
			\path (V4) edge (V6);
			\path (V5) edge (V7);
			\path (V5) edge (V8);
			\path (V6) edge (V7);
			\path (V6) edge (V9);
			\path (V7) edge (V8);
			\path (V7) edge (V9);
			\path (V7) edge (V10);
			\path (V8) edge (V10);
			\path (V8) edge (V12);
			\path (V9) edge (V10);
			\path (V10) edge (V11);
			\path (V10) edge (V12);

			\node[] (P1) at ( 2.5, -1) {$p_3$};
			\path (2.5,  2.5) edge[-,thick] (P1);
			\node[] (P2) at ( 6.5, -1) {$p_7$};
			\path (6.5,  2.5) edge[-,thick] (P2);
			\node[] (P3) at ( 7.5, -1) {$p_8$};
			\path (7.5,  2.5) edge[-,thick] (P3);
		\end{tikzpicture}
	\caption{A \TopOr for the example degree sequence~$\S =\left\{ \binom{0}{1},\binom{0}{1},\binom{0}{2},\binom{2}{2},\binom{2}{2},\binom{1}{2},\binom{2}{3},\binom{3}{2},\binom{2}{1},\binom{3}{2},\binom{2}{0},\binom{1}{0}\right \}$. The highlighted potentials are as follows:~$p_3=(3,1)^T$, $p_7 = (4,1,1)^T$, and $p_8 = (3,2)^T$.
	}
	\label{fig:potentialExample}
	\end{center}
\end{figure}
See \autoref{fig:potentialExample} for an example of the definition. 
If the \tOrd~$\T$ is clear from the context, then we write~$p$ instead of~$p^\T$.
Observe that, for any potential~$p_i \in \N^{\Delta}$, it holds that~$p_i[j] \ge p_i[j+1]$ for all~$1 \le j < \Delta$.
We denote with $0^\Delta$ the potential of value zero.

\paragraph{Algorithm Outline.}
Our algorithm consists of two parts. First, if the degree sequence of a \DAGR instance admits a dag realization where at any position the value of the potential is at least~$\pot$, then we will find such a ``high-potential'' realization with the algorithm that is described in \autoref{sec:high-potential-sequences}. Otherwise, by exploiting the fact that the value of all potentials is upper-bounded, we will find a ``low potential'' realization with the algorithm described in \autoref{sec:low potential-sequences}.

\subsection{General Terms and Observations}
In this section we introduce some general notations and observations that will be used in the algorithms to find high potential as well as low potential realizations.\medskip

\noindent \emph{Notation:} For a \tOrd $\T=v_1,\ldots,v_n$ and two indices $1\le i\le j\le n$, set $\T[i,j]:=v_i,v_{i+1},\ldots,v_j$. The set $\{v_i,\ldots,v_{j}\}$ is also denoted by $\phi[i,j]$.

\begin{definition}\label{def:type} Let $\S=\left \{ \binom{a_1}{b_1}, \ldots, \binom{a_n}{b_n}\right \}$ be a degree sequence. Two tuples $\binom{a_i}{b_i}$ and $\binom{a_j}{b_j}$ are of the same \emph{type} if $a_i=a_j$ and $b_i=b_j$.
Furthermore, a type $\binom{a_i}{b_i}$ is a \emph{good type} if $a_i\le b_i$ and otherwise it is a \emph{bad type}.
\end{definition}
Note that there are at most $(\Delta+1)^2$ different types.

\paragraph{Well-connected dags.}

\citet{BM11a} already observed that, given a degree sequence~$\mathcal{S}=\left\{\binom{a_1}{b_1},\ldots,\binom{a_n}{b_n}\right\}$, one can check in polynomial time whether~$\S$ is realizable by a dag with a corresponding \tOrd $v_1,\ldots,v_n$ where $d^-(v_i)=a_i$ and $d^+(v_i)=b_i$. This implies that it is sufficient to compute the correct ordering of the elements in~$\S$ as they appear in a \tOrd of a realizing dag. To prove this, the main observation is that for any \tOrd one can construct at least one corresponding dag by \emph{well-connecting} consecutive vertices. 

\begin{definition}\label{def:well-connected}
	Let~$D$ be a dag with a corresponding \tOrd $\phi=v_1,\ldots,v_n$. The \emph{remaining outdegree} at position $j$ of vertex~$v_i$, $1\le i \le j \le n$, is the number of $v_i$'s neighbors in the subsequence~$\phi[j,n]$.
	Furthermore, $D$ is \emph{well-connected} if for all vertices~$v_i \in \phi$ it holds that~$v_i$ is connected to the~$d^-(v_i)$ vertices in~$\phi[1,i-1]$ that have the highest remaining outdegree at position~$i-1$.
\end{definition}
As a consequence of \autoref{def:well-connected} we show that in a well-connected dag the potential at position~$i$ can be easily determined from that at position~$i-1$.

\begin{lemma} \label{lem:ordering-defines-potential}
	Let $\phi=v_1,\ldots,v_n$ be a \tOrd. Then, there is a well-connected dag~$D$ such that~$\phi$ is also a \tOrd for~$D$. Furthermore, for all $1<i\le n$ and $1\le j\le \Delta$ it holds that
\begin{align*}
   \begin{split}
		p_{i}[j] =& 
		\begin{cases}
			p_{i-1}[j] & \text{if } j<\Delta \wedge p_{i-1}[j+1]\ge d^-(v_i),\\
			p_{i-1}[j]-(d^-(v_i)-p_{i-1}[j+1]) & \text{if } j<\Delta \wedge p_{i-1}[j]\ge d^-(v_i), \\
			p_{i-1}[j+1] & \text{if } j<\Delta \wedge p_{i-1}[j]<d^-(v_i),\\
%
%
			\max\{0, p_{i-1}[j] - d^-(v_{i})\} & \text{if } j = \Delta. \\
		\end{cases}
\\
 +&
\begin{cases}
	1 & d^+(v_i)\ge j,\\
	0 & \text{otherwise}
\end{cases}
   \end{split}
\end{align*}

\end{lemma}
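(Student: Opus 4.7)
The lemma has two parts: existence of a well-connected dag $D$ with $\phi$ as a topological ordering, and the recurrence for the potentials of $D$. The plan is to handle the first part by an arc-swap argument starting from any realization of $\S$ whose topological ordering is $\phi$, and the second part by a case analysis on how the sorted multiset of ``remaining outdegrees'' of the predecessors changes when $v_i$ is inserted into the prefix.

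For the existence part, I would fix any realization $D_0$ of $\S$ having $\phi$ as a topological ordering and repeatedly rearrange arcs toward the well-connected condition. If $D_0$ is not yet well-connected, let $i$ be the smallest position at which the greedy condition fails. Then there exist predecessors $v_a, v_b \in \phi[1, i-1]$ with $(v_a, v_i) \in A$, $(v_b, v_i) \notin A$, and the remaining outdegree of $v_b$ at position~$i-1$ strictly exceeds that of $v_a$. The strict inequality, together with $v_i$ lying in the out-neighborhood of $v_a$ but not of $v_b$, forces the existence of some $v_c$ with $c > i$, $(v_b, v_c) \in A$, and $(v_a, v_c) \notin A$. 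Replacing the arcs $(v_a, v_i), (v_b, v_c)$ by $(v_a, v_c), (v_b, v_i)$ preserves all in- and out-degrees, keeps $\phi$ a topological ordering (since $c > i > \max(a,b)$), and restores the greedy choice at position~$i$ without disturbing positions below it. A lex potential that counts, per position~$i'$ from left to right, the number of ``wrong'' in-neighbors of $v_{i'}$ decreases strictly with every swap, so the process terminates at a well-connected~$D$.

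For the recurrence, fix $i$ and $j$ and decompose $p_i[j]$ into a contribution from $\phi[1, i-1]$ and a contribution from $v_i$ itself. The latter contributes $1$ exactly when $d^+(v_i) \ge j$, giving the second summand. For the former, let $r_k$ be the number of out-neighbors of $v_k$ in $\phi[i, n]$ for $k < i$, so that $p_{i-1}[j] = |\{k < i : r_k \ge j\}|$; after $v_i$ moves into the prefix, the number of out-neighbors of $v_k$ in $\phi[i+1, n]$ becomes $r_k - 1$ if $(v_k, v_i) \in A$ and $r_k$ otherwise. By well-connectedness, the $d^-(v_i)$ decremented predecessors are exactly those with the largest~$r_k$. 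Sorting the $r_k$ non-increasingly and counting how many remain $\ge j$ after the decrement yields the three stated branches according to whether $d^-(v_i) \le p_{i-1}[j+1]$, $p_{i-1}[j+1] < d^-(v_i) \le p_{i-1}[j]$, or $d^-(v_i) > p_{i-1}[j]$; in the middle case exactly $d^-(v_i) - p_{i-1}[j+1]$ entries equal to $j$ drop below the threshold. For $j = \Delta$ the value $p_{i-1}[\Delta+1]$ is implicitly~$0$, so the three branches collapse to $\max\{0, p_{i-1}[\Delta] - d^-(v_i)\}$.

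The main obstacle will be the exchange step: one must certify that a swap does not retroactively violate well-connectedness at positions below~$i$ and that the chosen termination potential is strictly monotone. This hinges on the strict inequality of remaining outdegrees, which is precisely what forces the receiver $v_c$ to lie at $c > i$ and keeps the swapped arcs forward-oriented in $\phi$. Once that is in place, the recurrence reduces to the routine sorting argument above.
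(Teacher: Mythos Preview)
Your proposal is correct and follows essentially the same approach as the paper: an arc-swap argument for the existence of a well-connected dag, followed by the observation that the $d^-(v_i)$ predecessors with highest remaining outdegree each lose one, while $v_i$ itself contributes to the first $d^+(v_i)$ entries. You are in fact more careful than the paper, which neither exhibits an explicit termination measure for the swap process nor spells out the three-way case split for the recurrence; the paper compresses the latter into a single sentence of intuition.
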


\begin{proof}
	Consider a \tOrd $\phi=v_1,\ldots,v_n$ for a dag~$D=(V,A)$ and suppose that the vertex~$v_i$ is not well-connected in~$D$. 
	The vertex~$v_i$ needs $d^-(v_i)$~ingoing arcs from the vertices in~$\phi[1,i-1]$ and has~$d^+(v_i)$ outgoing arcs to the vertices in~$\phi[i+1,n]$.
	Since~$v_i$ is not well-connected, there exist two vertices~$v_h, v_l \in \phi[1,i-1]$ such that~$v_l$ has lower remaining outdegree at position~$i-1$ than~$v_h$ and~$(v_l,v_i) \in A$ and~$(v_h,v_i) \notin A$.
	Moreover, since~$v_h$ has higher remaining outdegree at position~$i-1$ than~$v_l$ and $(v_h,v_i)\notin A$, there exists a vertex~$u \in \phi[i+1,n]$ such that~$(v_l,u) \notin A$ and~$(v_h,u) \in A$.
	Thus, $D' = (V,A')$ with $A' := (A \backslash \{(v_l,v_i), (v_h,u)\}) \cup \{(v_h,v_i), (v_l,u)\}$ is also a dag such that $\phi$ corresponds to~$D'$. By iteratively performing this operation we obtain a dag in which all vertices are well-connected.

	Now, consider the well-connected dag~$D$ with corresponding \tOrd $\phi=v_1,\ldots,v_n$. Then, the potential~$p_i$ computes from~$p_{i-1}$ as follows:
	The~$d^-(v_i)$ ingoing arcs to~$v_i$ decrease the outdegree of the $d^-(v_i)$~vertices with highest remaining outdegree at position~$i-1$ by one.
	Additionally, the vertex~$v_i$ has outdegree~$d^+(v_i)$ and, thus, all $p_i[j]$ with~$d^+(v_i)\ge j$ are increased by one.
\end{proof}

\paragraph{Cut-out subsequences.} The following lemma shows that if in a \tOrd $\phi[1,n]$ there are two indices~$1\le i<j\le n$ with equal potential, then we can cut out $\phi[i+1,j]$ resulting in a \tOrd~$\phi[1,i][j+1,n]$. We later show that we can reinsert $\phi[i+1,j]$ at any position that fulfills some reasonable conditions. This is the main operation that we perform in order to ``restructure'' a \tOrd such that we can exploit the resulting regular structure in our algorithms.

\begin{lemma}\label{lem:cut-out-neutral-block}
Let $\T=v_1,\ldots,v_n$ be a \TopOr for the degree sequence $\S=\left \{ \binom{a_1}{b_1}, \ldots, \binom{a_n}{b_n}\right \}$. If there are two indices $1\le i< j\le n$ such that $p^\T_i=p^\T_j$, then the sequence $\T'=\T[1,i] \T[j+1,n]$ is a \TopOr for the degree sequence that results from~$\S$ by deleting the degrees of the vertices in~$\T[i+1,j]$. Moreover, the potential $p^{\T'}_{i+l}$ is equal to~$p^\T_{j+l}$ for all~$1 \le l \le n-j$.
\end{lemma}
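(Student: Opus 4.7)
The proof plan hinges on the observation — implicit in \autoref{lem:ordering-defines-potential} — that the potential evolves deterministically: once $p_{l-1}$ is fixed, the next potential $p_l$ is a function solely of $p_{l-1}$ and the type $\binom{d^-(v_l)}{d^+(v_l)}$ of the vertex appended at position $l$. Thus the potential trajectory along any \tOrd depends only on the sequence of degrees, not on the particular well-connected dag realizing it.

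First I would observe that $\T$ and $\T'$ share the prefix $v_1, \ldots, v_i$, so trivially $p_l^{\T'} = p_l^\T$ for all $1 \le l \le i$; in particular the base case $p_i^{\T'} = p_i^\T = p_j^\T$ holds by hypothesis. Then I would prove $p_{i+l}^{\T'} = p_{j+l}^\T$ for $0 \le l \le n-j$ by induction on $l$. For the inductive step, the vertex appended at position $i+l$ of $\T'$ is literally $v_{j+l}$ (the same vertex appearing at position $j+l$ of $\T$), so the update rule from \autoref{lem:ordering-defines-potential} receives identical inputs in both cases and therefore produces identical outputs.

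To conclude that $\T'$ is a \TopOr (and not merely a sequence with well-defined potentials), I would invoke the well-connecting construction of \autoref{lem:ordering-defines-potential} on $\T'$ to produce a dag $D'$. This construction is feasible at each step $i+l$ because it requires $p_{i+l-1}^{\T'}[1] \ge d^-(v_{j+l})$, and by the induction this quantity equals $p_{j+l-1}^\T[1]$, which satisfies the inequality since $\T$ is a \TopOr. Furthermore, applying the induction at $l = n-j$ gives $p_{n-(j-i)}^{\T'} = p_n^\T = 0^\Delta$, so no outgoing arcs remain unmatched at the end of $\T'$; hence $D'$ is a valid dag realizing $\S$ with the degrees of $\T[i+1, j]$ removed, and $\T'$ is one of its topological orderings.

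The main obstacle is purely notational: one must keep the superscripts $\T$ and $\T'$ straight in the induction, and explicitly verify that the feasibility condition of the well-connecting construction transfers through the potential equality. Once the deterministic nature of the update rule is highlighted, the statement is essentially immediate.
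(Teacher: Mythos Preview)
Your proof is correct but proceeds differently from the paper's. The paper constructs the realizing dag for $\T'$ by surgery on the given dag for $\T$: the equality $p_i^\T=p_j^\T$ yields a bijection $f$ from the vertices of $\T[1,j]$ having neighbours in $\T[j+1,n]$ onto the vertices of $\T[1,i]$ having neighbours in $\T[i+1,n]$, preserving the remaining outdegree; one then deletes $\T[i+1,j]$ and re-sources every arc $(v,u)$ with $v\in\T[1,j]$, $u\in\T[j+1,n]$ to $(f(v),u)$. You instead discard the original dag, view the potential as a state evolving deterministically under the rule of \autoref{lem:ordering-defines-potential}, carry the equality $p_i^{\T}=p_j^{\T}$ forward by induction, and rebuild a dag for $\T'$ from scratch by well-connecting. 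Your route is cleaner once \autoref{lem:ordering-defines-potential} is in hand and makes the potential equalities along the tail automatic rather than something to verify separately. The paper's bijection argument has the minor advantage that it applies verbatim to an arbitrary realizing dag for $\T$, whereas your use of the update rule tacitly assumes the potentials $p^\T$ are those of the well-connected realization; this assumption is harmless here (all of \autoref{sec:fpt} treats $p^\T$ as determined by $\T$ alone), but it would be worth stating explicitly at the outset.
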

\begin{proof}
  Let $\S = \binom{a_1}{b_1}, \ldots, \binom{a_n}{a_n}$ be a degree sequence and let $1\le i<j\le n$ be two indices such that in a \TopOr $\T=v_1,\ldots,v_n$  it holds that $p^\T_i=p^\T_j$. Denoting by $\S'$ the degree sequence that results from~$\S$ by deleting all the degrees of the vertices in~$\T[i+1,j]$, we show that $\T':=\T[1,i]\T[j+1,n]$ is a \TopOr for~$\S'$.

  Let $V^{1-i}$ be all vertices in $\T[1,i]$ that have at least one neighbor in $\T[i+1,n]$ and, correspondingly, let $V^{1-j}$ be the vertices in $\T[1,j]$ that have at least one neighbor in $\T[j+1,n]$. By the definition of a potential, for all $1\le l\le \Delta$, the number of vertices in~$\T[1,i]$ that have exactly~$l$ neighbors in $\T[i+1,n]$ is equal to the number of vertices in $\T[1,j]$ that have exactly~$l$ neighbors in $\T[j+1,n]$. Thus, there is a bijection $f:V^{1-j}\rightarrow V^{1-i}$ such that for all~$v\in V^{1-j}$ it holds that vertex~$f(v)$ has the same number of neighbors in $\T[i+1,n]$ as~$v$ in $\T[j+1,n]$. Thus, deleting in the dag that corresponds to $\T$ the vertices in $\T[i+1,j]$ and exchanging every arc from a vertex $v\in \T[1,j]$ to a vertex $u\in \T[j+1,n]$ by $(f(v),u)$ results in a dag that is a realization for~$\S'$. Note that the vertex at position~$i+1$ in~$\T'$ is the same as the vertex at position~$j+1$ in~$\T$. Since $f$~is a bijection it is clear that the potential at position $i+l$ in~$\T'$ for~$1 \le l \le n-j$ is equal to the potential at position~$j+l$ in~$\T$.
\end{proof}
\autoref{lem:cut-out-neutral-block} shows that from a \tOrd $\phi$ we can cut out a subsequence~$\phi[i+1,j]$ whenever $p^{\phi}_i=p^{\phi}_j$. Informally speaking, we shall show that the subsequence $\phi[i+1,j]$ can be inserted into any \tOrd~$\phi'$ at position~$b$ whenever there is ``enough potential'' from the left part~$\phi'[1,b]$ to satisfy the indegrees of~$\phi[i,j]$. Then, the structure of~$\phi[i,j]$ guarantees that the ``remaining potential'' of $\phi'[1,b]\phi[i+1,j]$ is sufficient to satisfy the indegree of $\phi'[b+1,n]$ and thus $\phi'[1,b]\phi[i+1,j]\phi'[b+1,n]$ is a \tOrd. We need the following definition to formalize the conditions that $\phi[i+1,j]$ has to fulfill.

\begin{definition}\label{def:partial-top-order}
      Let $\S=\left \{ \binom{a_1}{b_1}, \ldots, \binom{a_n}{b_n}\right \}$ be a degree sequence and let $p^s,p^t\in \N^\Delta$ be two vectors.
      Furthermore, let~$P^s$ be a degree sequence with maximum degree~$\Delta$ consisting of~$p^s[1]$ elements such that for each $1\le l \le \Delta$ there are exactly~$p^s[l]$ elements $\binom{0}{b}$ where $b\ge l$.
      Correspondingly, let $P^t$ be a degree sequence consisting of $\w(p^t)$ entries, all of the form $\binom{1}{0}$.
      Let $\phi$ be a \TopOr for $\S \uplus P^s\uplus P^t$ where the vertices whose degrees correspond to~$P^s$ ($P^t$) are the first (last) vertices. If the potential at position~$p^s[1]+n$ is exactly~$p^t$, then $\phi[p^s[1]+1,p^s[1]+n]$ is a \emph{\pTopOr{}} for~$\S$ with input potential~$p^s$ and output potential~$p^t$.
\end{definition}
Note that in \autoref{def:partial-top-order} for the \TopOr $\phi$ for $\S\uplus P^s\uplus P^t$ it holds that the potential at position~$p^s[1]$ is $p^s$, and by definition at position $p^s[1]+n$ it is $p^t$.
Furthermore, for a \TopOr $\phi=v_1,\ldots,v_n$, for all $1\le i<j\le n$ it holds that $\phi[i+1,j]$ is a \pTopOr with input potential~$p_{i}$ and output potential~$p_j$.

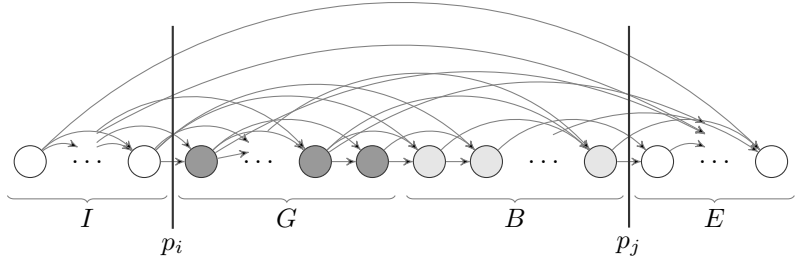
\begin{figure}[t]
	\begin{center}
		\begin{tikzpicture}[draw=black!50,decoration=brace, scale=1.5,->,>=stealth']
			\tikzstyle{vertex}=[circle,draw=black!80,minimum size=12pt,inner sep=0pt]

			\foreach \y in {1,...,3}{
				\node[] (s-\y) at (0.5,\y * 0.1) {};
				\node[] (m1-\y) at (2,\y * 0.1) {};
				\node[] (m2-\y) at (4.5,\y * 0.1) {};
				\node[] (e-\y) at (6,\y * 0.1) {};
			}

			\node[vertex] (S-S) at (0,0) {}; 
			\node[]  at (0.5,0) {$\ldots$};
			\node[vertex] (S-E) at (1,0) {}; 
			\node[vertex,fill=gray!80] (M-S) at (1.5,0) {}; 
			\node[]  at (2,0) {$\ldots$};
			\node[vertex,fill=gray!80] (M-M1) at (2.5,0) {}; 
			\node[vertex,fill=gray!80] (M-M2) at (3,0) {}; 
			\node[vertex,fill=gray!20] (M-M3) at (3.5,0) {}; 
			\node[vertex,fill=gray!20] (M-M4) at (4,0) {}; 
			\node[]  at (4.5,0) {$\ldots$};
			\node[vertex,fill=gray!20] (M-E) at (5,0) {}; 
			\node[vertex] (E-S) at (5.5,0) {}; 
			\node[]  at (6,0) {$\ldots$};
			\node[vertex] (E-E) at (6.5,0) {}; 

			\draw[-,decorate] (1.2,-.3) -- (-0.2,-.3);
			\node at (0.5,-0.5) {$I$};
			\draw[-,decorate] (3.2,-0.3) -- (1.3,-0.3);
			\node at (2.25,-0.5) {$G$};
			\draw[-,decorate] (5.2,-0.3) -- (3.3,-0.3);
			\node at (4.25,-0.5) {$B$};
			\draw[-,decorate] (6.7,-0.3) -- (5.3,-0.3);
			\node at (6,-0.5) {$E$};

 			\path (S-S) edge[bend left=45] (E-E);
 			\path (S-S) edge[bend left] (s-1);
			\path (s-1) edge[bend left] (S-E);
			\path (S-S) edge[bend left=45] (S-E);
			\path (S-E) edge[] (M-S);
			\path (S-E) edge[bend left=45] (m1-1);
			\path (S-E) edge[bend left=45] (M-M3);
			\path (S-E) edge[bend left=45] (M-M4);

			\path (M-S) edge[] (m1-1);
			\path (M-S) edge[bend left=45] (M-M1);
			\path (M-S) edge[bend left=45] (M-M2);
			\path (s-1) edge[bend left=45] (M-S);
			\path (s-2) edge[bend left=45] (M-M1);
			\path (M-M1) edge[] (M-M2);
			\path (M-M1) edge[bend left=45] (M-M3);
			\path (M-M1) edge[bend left=45] (M-E);
			\path (M-M2) edge[] (M-M3);
			\path (M-M2) edge[bend left=45] (M-M4);
			\path (M-M3) edge[] (M-M4);
			\path (M-M3) edge[bend left=45] (M-E);
			\path (M-M4) edge[bend left=45] (E-S);
			\path (m1-2) edge[bend left=45] (M-E);
			\path (M-E) edge[bend left=45] (E-E);
			\path (M-E) edge[] (E-S);

			\path (E-S) edge[bend left] (e-1);
			\path (M-S) edge[bend left] (e-2);
			\path (M-M1) edge[bend left] (e-3);
			\path (s-2) edge[bend left] (e-2);
			\path (m2-2) edge[bend left] (E-E);

			\node[] (P1) at ( 1.25, -0.75) {$p_i$};
			\path (1.25,  1.2) edge[-,thick,draw=black!80] (P1);

			\node[] (P2) at ( 5.25, -0.75) {$p_j$};
			\path (5.25,  1.2) edge[-,thick,draw=black!80] (P2);

		\end{tikzpicture} 
	\end{center}
	\caption{A realizing high potential topological ordering that corresponds to the pattern $I \circ G \circ B \circ E$. Thereby, $I$~is a subsequence of length at most $\Delta^{2\Delta}$ such that the first high potential occurs at position~$i$. Correspondingly, $j$~is the last position with high potential and~$E$  is a sequence of length at most $\Delta^{2\Delta}$. The sequence~$G$ (resp., $B$) only consists of good (bad) type vertices but is of arbitrary length. All high potential realizations can be reordered to fit into this pattern.}
	\label{fig:exampleHighPotential}
\end{figure}

\subsection{High Potential Sequences}\label{sec:high-potential-sequences}
In order to show that \DAGR is fixed-parameter tractable with respect to the parameter maximum degree~$\Delta$, in this subsection we show that if a realizable sequence admits a \TopOr where at some position the value of the potential is at least~$\Delta^2$, a so-called \emph{high potential \TopOr}, then there is also a \TopOr~$\phi$ that is of the following ``pattern'' (see \autoref{fig:exampleHighPotential} for an illustration): The ordering~$\phi$ can be partitioned into four sub-sequences $I \circ G \circ B \circ E$ (where~$\circ$ is the concatenation). The sequence~$I$ is an initializing sequence that ``establishes'' a potential of value at least~$\Delta^2$, a so-called \emph{high potential}. Correspondingly, at the end there is a sequence~$E$ that reduces the value of the potential from a value that is greater than~$\Delta^2$ to zero. Furthermore, $I$ and~$E$ are of length at most~$\Delta^{2\Delta}$ and thus can be guessed in $O((\Delta+1)^2)^{\Delta^{2\Delta}})=O(\Delta^{2\Delta^{2\Delta}})$ time. The subsequence~$G$, which is of arbitrary length, only consists of good types and, correspondingly,~$B$ is of arbitrary length but only consists of bad types in arbitrary order.

Our strategy to prove that there is a high potential \TopOr with the pattern $I \circ G \circ B \circ E$ is as follows. Let $\phi=v_1,\ldots,v_n$ be an arbitrary high potential \TopOr and let~$i$ be the minimum position with high potential and, symmetrically, let~$j$ be the maximum position with high potential. In the first part of this subsection (see \autoref{prop:short-begin-end-high-potential}), we show that we can assume that $i\le \Delta^{2\Delta}$ and $j\ge n-\Delta^{2\Delta}$. Towards this the main argument is that if $i>\Delta^{2\Delta}$, since there are $O(\Delta^{2\Delta})$ potentials with value less than~$\Delta^2$, there have to be two positions $1\le l_1<l_2<i$ with $p_{l_1}=p_{l_2}$. Then, by \autoref{lem:cut-out-neutral-block}, we can cut out $\phi[l_1+1,l_2]$ from $\phi$ and we will show (see \autoref{lem:introduce-neutral-block-into-high}) that we can reinsert it right behind~$i$, resulting in a \TopOr $\phi[1,l_1]\phi[l_2+1,i]\phi[l_1+1,l_2]\phi[i+1,n]$. By iteratively applying this operation, we end up with a \TopOr where the minimum position with high potential is at most $\Delta^{2\Delta}$.  A symmetric argument holds for the maximum position~$j$ with high potential.

In the second part 
we show that we can arbitrarily sort the vertices in~$\phi[i+1,j]$  under the constraint that at first vertices of good type occur in any order, and then they are followed by the bad type vertices (see \autoref{prop:sort-between-high-potential}). Altogether, this shows that in order to check whether there is a high potential \TopOr it is sufficient to branch into all possibilities to choose~$I$ and~$E$, insert the remaining vertices sorted by good and bad types between~$I$ and~$E$, and, finally, check whether this ordering is a \tOrd.

We first prove that if we have two \pTopOr{s}~$\phi_1$ and~$\phi_2$ where~$\phi_1$ has input potential~$0^\Delta$ and $\phi_2$ has output potential~$0^\Delta$ and the output potential of~$\phi_1$ is at ``least as good'' as the input potential of $\phi_2$, then we can merge them to a \TopOr $\phi_1\phi_2$ while preserving the indegree and outdegree of all vertices. Before proving that, we define a partial order for potentials.

\begin{definition}\label{def:partial-order-potential}
For $p,p'\in \N^\Delta$, $p\ge p'$ if $\forall 1\le j\le \Delta: \sum_{i=1}^{j}p[i]\ge \sum_{i=1}^{j}p'[i]$.
\end{definition}
The intuition of \autoref{def:partial-order-potential} is that a potential~$p$ is at least as good as a potential~$p'$ if subsequent vertices that can be connected with potential~$p'$ can also be connected with potential~$p$. To gurantee that there are enough vertices of degree at least~$i$, it either has to hold that $p[i]\ge p'[i]$ or there is a sufficiently large ``overhang'' of vertices with degree less than~$i$ that are not necessary to gurantee the existence of vertices with degree less than~$i$. Formally, $\sum_{j=1}^{i-1}p[j]-\sum_{j=1}^{i-1} p'[j]\ge p'[i]-p[i]$.

\begin{lemma}\label{lem:opt-subsequence}
 Let $\phi=v_1,\ldots,v_n$ be a \TopOr for a degree sequence~$\S$, and let $1\le i\le n$ be an arbitrary position. For any \pTopOr~$\phi'$ for a degree sequence~$\S'$ with input potential~$0^\Delta$ and output potential $p\in \N^\Delta$ with $\w(p)=\w(p^\phi_i)$ and $p\ge p^\phi_i$, the sequence $\phi'\phi[i+1,n]$ is a \TopOr for~$\S'\uplus  \left\{\binom{d^-(v_{i+1})}{d^+(v_{i+1})}, \ldots, \binom{d^-(v_{n})}{d^+(v_{n})}\right\}$.
\end{lemma}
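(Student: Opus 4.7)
The plan is to construct the required realization directly by re-combining pieces from two existing dags, avoiding induction on the length of~$\phi[i+1,n]$. First I fix any dag~$D$ that realizes~$\S$ along~$\phi$, and any dag~$\widetilde{D}$ that witnesses~$\phi'$ as a \pTopOr{} for~$\S'$ in the sense of \autoref{def:partial-top-order}. Since the input potential of~$\phi'$ is~$0^\Delta$, the left pad~$P^s$ is empty and~$\widetilde{D}$ is simply~$\phi'$ followed by~$\w(p)$ dummy sinks~$P^t$; the arcs from~$\phi'$ to~$P^t$ endow each vertex~$u\in\phi'$ with a nonnegative ``reserved outdegree''~$\mu_u$. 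By the definition of potential, the number of vertices with~$\mu_u \ge l$ equals~$p[l]$ for every~$l$, i.e.\ $p$ is the \emph{conjugate partition} of the multiset~$\mu$. Analogously, in~$D$ each vertex~$w$ of~$\phi[1,i]$ has a nonnegative number~$\lambda_w$ of arcs going into~$\phi[i+1,n]$, and~$p^\phi_i$ is the conjugate partition of the multiset~$\lambda$.

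Next I assemble a new dag~$D''$ on the vertex set of~$\S' \uplus \left\{\binom{d^-(v_{i+1})}{d^+(v_{i+1})},\ldots,\binom{d^-(v_{n})}{d^+(v_{n})}\right\}$ from three kinds of arcs: (i)~the arcs of~$\widetilde{D}$ lying inside~$\phi'$, (ii)~the arcs of~$D$ lying inside~$\phi[i+1,n]$, and (iii)~a bipartite graph~$B$ whose left side is~$\phi'$ with left-degree multiset~$\mu$ and whose right side is~$\phi[i+1,n]$ with right-degree~$r_k$ at~$v_{i+k}$, where~$r_k$ counts the arcs from~$\phi[1,i]$ to~$v_{i+k}$ in~$D$. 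Any such~$D''$ is automatically acyclic, has~$\phi'\phi[i+1,n]$ as a \tOrd{}, and realizes the target degree sequence. Thus the entire lemma reduces to showing that the bipartite degree sequence~$(\mu,(r_k))$ is realizable.

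For this I invoke the Gale--Ryser theorem. The totals match: $\sum_u \mu_u = \w(p) = \w(p^\phi_i) = \sum_k r_k$. The analogous bipartite graph inside~$D$ has left degrees~$\lambda$ and the very same right degrees~$(r_k)$, and its mere existence yields
\[ \sum_{j=1}^{t}\lambda_{(j)} \le \sum_{k}\min(r_k,t) \qquad\text{for every }t, \]
where $(j)$ denotes the sorted-decreasing order. The standard conjugate-partition identity $\sum_{j=1}^{l}p[j]=\sum_u \min(\mu_u,l)$ turns the hypothesis $p\ge p^\phi_i$ of \autoref{def:partial-order-potential}, together with~$\w(p)=\w(p^\phi_i)$, into the classical majorization relation $\sum_{j=1}^{t}\mu_{(j)}\le\sum_{j=1}^{t}\lambda_{(j)}$ for every~$t$. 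Chaining these two inequalities produces the Gale--Ryser condition for~$(\mu,(r_k))$, so~$B$ exists and can be spliced in.

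The main obstacle is this translation between the partial order of \autoref{def:partial-order-potential} (formulated on the conjugate vectors) and classical majorization of the outdegree multisets~$\lambda$ and~$\mu$: one has to observe that the equality of~$\w$-values is exactly what is needed to convert partial-sum inequalities on the conjugate vectors into majorization inequalities on the underlying sorted outdegree sequences. Once this dictionary is established, the Gale--Ryser application yields the bipartite realization immediately, and the three-piece dag~$D''$ finishes the proof.
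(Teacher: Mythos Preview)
Your proof is correct and takes a genuinely different route from the paper. The paper argues by induction on~$r$: it well-connects the vertices $v_{i+1},\ldots,v_{i+r}$ one at a time into~$\phi'$ and maintains the invariant $p^r\ge p^\phi_{i+r}$ (same value, dominating partial sums) at every intermediate position, using the explicit potential update formula of \autoref{lem:ordering-defines-potential}; the inductive step is a somewhat delicate inequality-chasing contradiction argument. You instead collapse the whole connection problem between $\phi'$ and $\phi[i+1,n]$ into a single bipartite realization question and dispatch it with Gale--Ryser, after translating the hypothesis $p\ge p^\phi_i$ with $\w(p)=\w(p^\phi_i)$ into the majorization $\mu\prec\lambda$ via the conjugate-partition dictionary. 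This is cleaner and conceptually more transparent: it pinpoints exactly why \autoref{def:partial-order-potential} is the ``right'' order on potentials (it is majorization of the conjugates, hence reverse majorization of the residual outdegree multisets). The paper's approach, on the other hand, is fully self-contained (no appeal to Gale--Ryser or the conjugation-reverses-majorization fact) and actually proves a bit more, namely that the potential dominance $p^r\ge p^\phi_{i+r}$ persists at \emph{every} intermediate position under well-connecting; your global bipartite argument does not yield this stepwise information, though it is not needed for the lemma as stated.
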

\begin{proof}
    Let $\phi=v_1,\ldots,v_n$ be a \TopOr for a degree sequence~$\S$, and let $1\le i\le n$ be an arbitrary position. Furthermore, let $\phi'$ be a \pTopOr for a degree sequence~$\S'$ with input potential~$0^\Delta$ and output potential $p\in \N^\Delta$ with $\w(p)=\w(p^\phi_i)$ and $p\ge p^{\phi}_i$ (see \autoref{def:partial-order-potential}). By \autoref{def:partial-top-order}, there are two degree sequences~$P^s$ and $P^t$ such that there is a \TopOr~$\phi'_{s,t}$ for $P^s\uplus P^t\uplus \S'$ where the vertices that correspond to $P^s$ ($P^t$) are the first (last) vertices.

    We show that the sequence $\phi'\phi[i+1,n]$ is a \TopOr for~$\S'\uplus \left\{ \binom{d^-(v_{i+1})}{d^+(v_{i+1})}, \ldots, \binom{d^-(v_{n})}{d^+(v_{n})}\right\}$.
    Therefore, we construct a dag~$D$ that corresponds to $\phi'\phi[i+1,n]$ and thus is a realization. We first copy all arcs between two vertices in $\phi[i+1,n]$ that are present in the dag for~$\phi$ and, correspondingly, all arcs between two vertices in~$\phi'$ that are present in the dag for~$\phi'_{s,t}$. Now, the potential in $\phi'$ at position~$|\phi'|$ is~$p$. By the condition of \autoref{lem:opt-subsequence}, it holds that $p\ge p^{\phi}_i$ and $\w(p)=\w(p^{\phi}_i)$.

    Now, we show how to connect the $n-i$ vertices in~$\phi[i+1,n]$ to their ancestors in~$\phi'$. Specifically, for $1\le r\le n-i$, let $v_r$ be the $r^\text{th}$ vertex in~$\phi[i+1,n]$. We show how to connect the vertex~$v_r$ to its ancestors such that the potential at position $|\phi'|+r$ in $\phi'\phi[i+1,n]$, denoted by~$p^r$, is greater than $p^{\phi}_{i+r}$.
    To this end, we use induction, meaning that we assume that the potential of~$p^{r-1}$ is greater than~$p^{\phi}_{i-1+r}$. (Clearly, at the beginning for $r=1$, the direct ancestor of~$v_r$ is the last vertex of~$\phi'$, and thus we set $p^0=p$, implying that $p^0=p\ge p^{\phi}_i$.) First, since $p^{r-1}[1]\ge p^{\phi}_{i-1+r}[1]$ it follows that we can well-connect~$v_r$ to its ancestors. 
	Note that for all~$j$, $1 \le j \le n-1$, it holds that~$\w(p^j) = \w(p^\phi_{i+j})$.
	We next prove that for the resulting potential~$p^r$ it holds that $p^r\ge p^\phi_{i+r}$. This completes our argumentation. 

	\newcommand{\p}{f}
	\newcommand{\pd}{\p^-}
	\newcommand{\e}{e}
	\newcommand{\ed}{\e^-}

	To this end, denoting by~$c\in N^\Delta$ the vector that has ones in the first $d^+(v_r)$ rows and the remaining entries are zero, by the definition of potentials it is clear that $p^r\ge p^\phi_{i+r}\Leftrightarrow p^r-c\ge p^\phi_{i+r}-c$.
    For the sake of readability we substitute as follows $\p=p^{r-1},\pd=p^r-c$, and $\e=p^{\phi}_{i+r-1}, \ed=p^{\phi}_{i+r}-c$ and we shall show that $\pd\ge \ed$. Note that from $\w(\p)=\w(\e)$ it follows that $\w(\p)-d^-(v_r)=\w(\pd)=\w(\ed)$. Towards a contradiction assume that there is a position $1\le l<\Delta$ such that
	\begin{equation}\label{eq:assumption}
		\sum_{j=1}^l \ed[j]-\pd[j]>0.
	\end{equation}
	Since $\p\ge \e$, it follows that 
	\begin{equation}\label{eq:diff}
		\sum_{j=1}^{l}\p[j]-\pd[j]>\sum_{j=1}^{l}\e[j]-\ed[j]
	\end{equation}
	and from this together with $\w(\pd)=\w(\ed)$ we can infer that
	\begin{equation*}
		\sum_{j=l+1}^{\Delta} \p[j]-\pd[j]<\sum_{j=l+1}^{\Delta} \e[j]-\ed[j].
	\end{equation*}
	By \autoref{lem:ordering-defines-potential} it follows that
	\begin{equation*}
		\sum_{j=l+1}^{\Delta} \p[j]-\pd[j]= \p[l+1] \text{ and } \sum_{j=l+1}^{\Delta} \e[j]-\ed[j]\le \e[l+1]
	\end{equation*}
From that and since $\w(\p)=\w(\e)=\w(\pd)+d^-(v_r)=\w(\ed)+d^-(v_r)$, for Inequality~\eqref{eq:diff} it follows from \autoref{lem:ordering-defines-potential} that 
\begin{align}
 \left(\sum_{j=1}^{l}\p[j]-\pd[j]\right)-\left(\sum_{j=1}^{l}\e[j]-\ed[j]\right)& = \notag \\
 \label{eq:dist} \left(\sum_{j=1}^{l}\p[j]-\e[j]\right)+\left(\sum_{j=1}^{l}\ed[j]-\pd[j]\right)& \le \e[l+1]-\p[l+1]
\end{align}
Since from $\p\ge \e$ it follows that $\sum_{j=1}^l \p[j]-\e[j]\ge \e[l+1]-\p[l+1]$ and this implies together with Inequality \eqref{eq:dist} that 
\begin{equation*}
 \sum_{j=1}^{l} \ed[j]-\pd[j]\le 0,
\end{equation*}
causing a contradiction to Inequality \eqref{eq:assumption}.
\end{proof}
\autoref{lem:opt-subsequence} shows that we can ``merge'' two \pTopOr{s} $\phi_1$ and~$\phi_2$ to~$\phi_1\phi_2$, if for the output potential~$p^{\phi_1}_o$ of~$\phi_1$ and the input potential~$p^{\phi_2}_i$ it holds that $p^{\phi_1}_o\ge p^{\phi_2}_i$ and $\w(p^{\phi_1}_0)=\w(p^{\phi_2}_i)$. The next lemma shows that the condition $p^{\phi_1}_o\ge p^{\phi_2}_i$ is not necessary in case of high potentials, that is, $\w(p^{\phi_1}_0)=\w(p^{\phi_2}_i)\ge \Delta^2$.
Before that, we need the following observation showing that for a fixed value there is a potential that is less than all others.
\begin{observation}\label{obs:worse-potential}
	For a fixed positive integer~$x$ let $p\in \N^\Delta$ be the potential with $$ p[j] = \begin{cases} \left\lceil\frac{x}{\Delta}\right\rceil, & \text{if } j \le x \text{ modulo } \Delta \\ \left\lfloor\frac{x}{\Delta}\right\rfloor, & \text{otherwise} \end{cases}$$
for all $1\le j\le \Delta$. Then, for all potentials $p'\in \N^\Delta$ with $\w(p')=\w(p)$ it holds that $p'\ge p$.
\end{observation}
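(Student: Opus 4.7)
The plan is to write $x = q\Delta + r$ with integer $q \ge 0$ and $0 \le r < \Delta$, so that $p$ consists of $r$ entries equal to $q+1$ followed by $\Delta - r$ entries equal to $q$, and consequently $\sum_{i=1}^j p[i] = jq + \min(j,r)$ for every $1 \le j \le \Delta$. Recall from the remark after \autoref{def:potential} that every potential is weakly decreasing, so both $p$ and an arbitrary candidate $p'$ with $\omega(p') = x$ satisfy $p'[1] \ge \ldots \ge p'[\Delta]$. Unfolding \autoref{def:partial-order-potential}, proving $p' \ge p$ then reduces to verifying $\sum_{i=1}^j p'[i] \ge jq + \min(j,r)$ for each~$j$.

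I would carry this out by a dichotomy on whether $p'[j]$ (respectively $p'[j+1]$) exceeds $q$ or not, which circumvents the fact that the naive averaging bound $\sum_{i=1}^j p'[i] \ge jx/\Delta$ is not sharp enough. For $j \le r$: if $p'[j] \ge q+1$, then weak monotonicity immediately yields $\sum_{i=1}^j p'[i] \ge j\,p'[j] \ge j(q+1) = jq + j$. If instead $p'[j] \le q$, monotonicity propagates to $p'[i] \le q$ for all $i \ge j$, so $\sum_{i=j+1}^\Delta p'[i] \le (\Delta - j) q$, and subtracting from $\omega(p') = x = q\Delta + r$ gives $\sum_{i=1}^j p'[i] \ge jq + r \ge jq + j$ because $r \ge j$.

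The case $j > r$ is handled symmetrically, with the case split now performed on $p'[j+1]$ (using the convention $p'[\Delta+1] = 0$ so the boundary case $j = \Delta$ is trivial, also matching the equality $\omega(p') = \omega(p)$): if $p'[j+1] \le q$, a nearly identical calculation yields $\sum_{i=1}^j p'[i] \ge jq + r$; if $p'[j+1] \ge q+1$, then by monotonicity $p'[i] \ge q+1$ for all $i \le j$, hence $\sum_{i=1}^j p'[i] \ge j(q+1) > jq + r$ since $j > r$. The only subtle step — and the one I expect to be the main obstacle — is recognizing that pure averaging is insufficient due to integrality, so that the right lever is this integer/monotone dichotomy around the threshold $q$; everything else is arithmetic.
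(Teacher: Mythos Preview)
Your argument is correct and rests on the same core observation as the paper's proof: once some entry of the weakly decreasing vector~$p'$ drops to at most $q=\lfloor x/\Delta\rfloor$, monotonicity forces the tail $\sum_{i>j}p'[i]\le(\Delta-j)q$, so the prefix sum is at least $jq+r$. The paper packages this as a short contradiction argument (from a violating index~$j$ it extracts some $t\le j$ with $p'[t]\le q$ and then bounds the tail), whereas you do a direct case split on $p'[j]$ versus $p'[j{+}1]$ relative to the threshold~$q$ and additionally distinguish $j\le r$ from $j>r$; this makes your write-up a bit longer but equally valid, and the underlying mechanism is identical.
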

\begin{proof}
 Let~$p\in \N^\Delta$ be the potential as defined in \autoref{obs:worse-potential} and let $p'\in \N^\Delta$ be a potential with $\w(p')=\w(p)$. Clearly, by definition it holds that $\w(p)=x$. Towards a contradiction assume that $p'<p$. Hence, there is a position $1\le j\le \Delta$ with $\sum_{l=1}^j p[l]>\sum_{l=1}^j p'[l]$.
 From this it follows that there is a position $1\le t\le j$ such that $p[t]>p'[t]$ and since $p[t]\le \lceil\nicefrac{x}{\Delta}\rceil$ it follows that $p'[t]\le \lfloor \nicefrac{x}{\Delta}\rfloor$. Recall that, by definition, for any potential it holds that $p[l_1]\ge p[l_2]$ for all $1\le l_1\le l_2\le \Delta$. Thus, from $p'[t]\le \lfloor\nicefrac{x}{\Delta}\rfloor$ it follows that $\sum_{l=j+1}^\Delta p'[l]\le (\Delta-j)\lfloor\nicefrac{x}{\Delta}\rfloor \le  \sum_{l=j+1}^\Delta p[l]$. Together with $\sum_{l=1}^j p[l]>\sum_{l=1}^j p'[l]$ this implies a contradiction to $\w(p)=\w(p')$.
\end{proof}

\begin{lemma}\label{lem:high-pot}
	Let $\phi=v_1,\ldots,v_n$ be a \TopOr for a degree sequence~$\S$ and let $\w(p^\T_i) \ge \Delta^2$ for some $1\le i\le n$. Then, for any \pTopOr $\phi'$  for a degree sequence~$\S'$ with input potential~$0^\Delta$ and output potential~$p$ with $\w(p)=\w(p^\T_i)$, the sequence~$\phi'\phi[i+1,n]$ is a \TopOr for~$\S'\uplus \left\{ \binom{d^-(v_{i+1})}{d^+(v_{i+1})}, \ldots, \binom{d^-(v_{n})}{d^+(v_{n})}\right\}$.
\end{lemma}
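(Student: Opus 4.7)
The plan is to combine \autoref{obs:worse-potential} with \autoref{lem:opt-subsequence} via an auxiliary construction. Set $x := \w(p^\T_i) \ge \Delta^2$ and let $\bar p$ denote the balanced worst-case potential of weight $x$ from \autoref{obs:worse-potential}; its entries lie in $\{\lfloor x/\Delta\rfloor, \lceil x/\Delta\rceil\}$, both at least $\Delta$, and in particular $\bar p[1] \ge \Delta$.

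First I would exhibit a \pTopOr $\phi^*$ for some suitably chosen degree sequence $\S^*$, with input potential $0^\Delta$ and output potential $\bar p$, such that $\phi^*\phi[i+1,n]$ is itself a \TopOr. A natural candidate for $\phi^*$ is a sequence of $\bar p[1]$ source vertices with in-degree zero whose out-degrees form the conjugate partition of $\bar p$, arranged so that placing them produces exactly the potential $\bar p$ (no internal arcs, since all indegrees are zero). The nontrivial part is then to verify that $\phi[i+1,n]$ can actually be appended, i.e., that each $v_{i+r}$ can be well-connected to its ancestors in the new ordering.

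Once $\phi^*\phi[i+1,n]$ is established as a \TopOr, the rest is immediate: its potential at position $|\phi^*|$ equals $\bar p$, and since $\w(p) = \w(\bar p)$, \autoref{obs:worse-potential} gives $p \ge \bar p$ in the partial order of \autoref{def:partial-order-potential}. Hence \autoref{lem:opt-subsequence} applied at position $|\phi^*|$ permits replacing $\phi^*$ by the given \pTopOr $\phi'$, producing $\phi'\phi[i+1,n]$ as a \TopOr for $\S'\uplus \left\{\binom{d^-(v_{i+1})}{d^+(v_{i+1})}, \ldots, \binom{d^-(v_n)}{d^+(v_n)}\right\}$, which is exactly the claim.

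The main obstacle is the auxiliary construction. As long as the running weight in $\phi^*\phi[i+1,n]$ stays at least $\Delta^2$, monotonicity of potentials forces the first coordinate to be $\ge \lceil \Delta^2/\Delta\rceil = \Delta$, and well-connection succeeds for free. The difficulty arises once the weight first dips below $\Delta^2$: the argument ``$p^r[1]\ge \Delta$ because the weight is high'' no longer applies. The natural fix is to show that at this transition step the current potential dominates the corresponding potential of the original $\phi$ in the partial order of \autoref{def:partial-order-potential}, so that the remaining tail of $\phi[i+1,n]$ can be handled by a direct appeal to \autoref{lem:opt-subsequence} applied inside $\phi$ itself. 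Choosing $\phi^*$ carefully and tracking how the well-connection rule of \autoref{lem:ordering-defines-potential} interacts with the partial order on potentials is where the bulk of the technical work lies.
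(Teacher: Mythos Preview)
Your high-level plan matches the paper's: reduce to the case where the output potential equals the balanced potential $\bar p$ of \autoref{obs:worse-potential}, establish $\phi^*\phi[i+1,n]$ as a \TopOr in that case, and then invoke \autoref{obs:worse-potential} together with \autoref{lem:opt-subsequence} for general $p$. The explicit choice of $\phi^*$ as $\bar p[1]$ sources is fine, and your observation that $\omega(p^r)\ge \Delta^2$ forces $p^r[1]\ge \Delta$ (since $\Delta\cdot p^r[1]\ge \omega(p^r)$) is a clean way to see that well-connection succeeds while the weight stays high.

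The genuine gap is in your proposed ``fix'' for the low-weight tail. You suggest showing that at the first position $r$ where the weight drops below $\Delta^2$ the running potential $p^r$ in $\phi^*\phi[i+1,n]$ dominates $p^\phi_{i+r}$, then applying \autoref{lem:opt-subsequence}. This does not hold. Starting from $\bar p$---which by \autoref{obs:worse-potential} is the \emph{smallest} potential of weight $x$---and well-connecting keeps all remaining outdegrees within~$1$ of each other; the resulting potential is again ``close to balanced'' and hence still small in the partial order of \autoref{def:partial-order-potential}. Concretely, $p^r[1]$ is at most $\bar p[1]=\lceil x/\Delta\rceil$, whereas $p^\phi_{i+r}[1]$ can be as large as $\omega(p^\phi_{i+r})$, so already the first partial-sum inequality required for $p^r\ge p^\phi_{i+r}$ fails in general. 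The base case of any inductive attempt also goes the wrong way: $\bar p\le p^\phi_i$, not $\ge$.

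The paper handles the hard step differently and avoids any domination claim. After stripping the internal arcs of $\phi[i+1,n]$ (so only the $\bar p[1]$ vertices of $\phi^*$ contribute to later potentials), it uses precisely the ``remaining outdegrees differ by at most~$1$'' invariant: if well-connection were ever to fail, then $p^{r-1}[1]<\Delta$; since at least $\Delta$ of the source vertices started with identical outdegree, one of them must have remaining outdegree~$0$, forcing all remaining outdegrees to be at most~$1$ and hence $\omega(p^{r-1})=p^{r-1}[1]$. But $\omega(p^{r-1})=\omega(p^\phi_{i+r-1})\ge d^-(v_{i+r})>p^{r-1}[1]$, a contradiction. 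This is the missing combinatorial ingredient you would need in place of your transition-domination idea.
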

\begin{proof}
  Let $\T=v_1,\ldots,v_n$ be a \TopOr for a degree sequence~$\S$ and let $1\le i\le n$ be a position with $\w(p^\T_i)\ge \Delta^2$. Furthermore, let~$\T'$ be a \pTopOr with input potential~$0^\Delta$ and output potential~$p$ with $\w(p)=\w(p^\T_i)=x$. We shall show that $\T'\T[i+1,n]$ is a \TopOr for $\S'\uplus \left\{ \binom{d^-(v_{i+1})}{d^+(v_{i+1})}, \ldots, \binom{d^-(v_{n})}{d^+(v_{n})}\right\}$. We prove \autoref{lem:high-pot} in case of $$ p[j] = \begin{cases} \left\lceil\frac{x}{\Delta}\right\rceil, & \text{if } j \le x \text{ modulo } \Delta \\ \left\lfloor\frac{x}{\Delta}\right\rfloor, & \text{otherwise} \end{cases}$$
  for all $1\le j\le \Delta$. Then, \autoref{obs:worse-potential} and \autoref{lem:opt-subsequence} imply its correctness in the general case.

  In the following, we describe how to construct a dag that corresponds to $\T'\T[i+1,n]$. We first add all arcs between two vertices in~$\phi[1,i]$ that are present in a dag for~$\T$. Correspondingly, we add all arcs between two vertices that are present in a dag for~$\T'$. Observe that it now only remains to add the arcs from a vertex in~$\T'$ to~$\T[i+1,n]$. For a more convenient construction of these arcs, assume that there are no arcs between two vertices in~$\T[i+1,n]$. (Clearly, because in the following we only add arcs having one endpoint in~$\T'$ and the other in~$\T[i+1,n]$, arcs between two vertices in~$\T[i+1,n]$ can be removed and, correspondingly, the degrees in~$\S$ can be adjusted. Afterwards, these arcs can be reinserted.)

  We next prove that it is possible to stepwise well-connect the vertices~$v_{i+r}$ for $r=1$ to $n-i$ to the vertices in $\T'$. By the assumption that there is no arc between vertices in~$\T[i+1,n]$, it holds that $\sum_{v\in \T[i+1,n]}d^-(v)=\w(p)=\w(p^\T_i)$. More specifically, denoting the potential at position~$|\T'|+r$ in $\T'\T[i+1,n]$ by~$p^r$, it is clear that $\w(p^\T_{i+r})=\w(p^r)$. 
  Now, towards a contradiction, assume that for some $1\le r \le n-i$ it is not possible to well-connect the vertex $v_{i+r}$ to $\T'\T[i+1,i+r-1]$. 
Clearly, since we cannot connect~$v_{i+r}$, it holds that $p^{r-1}[1]<d^-(v_{i+r})\le p^\T_{i+r-1}[1]$. 
 
  For $0\le j\le r-1$ and for a vertex~$v\in \T'$ consider the remaining outdegree~$d^+_j(v)$, that is, the outdegree of~$v$ minus the number arcs from~$v$ to any vertex in $\T'\T[i+1,i+j]$. Since we always choose the vertices with highest remaining outdegree from~$\T'$ to connect a vertex in~$\T[i+1,n]$ it follows that 
\begin{equation}\label{eq:gap-is-gett-smaller}
\max\{1,|d^+_j(v_1)-d^+_j(v_2)|\}\ge |d^+_{j+1}(v_1)-d^+_{j+1}(v_2)| 
\end{equation}
for all $0\le j<r-1$ and $v_1,v_2\in \T'$.
  Moreover, recall that, since $x\ge \Delta^2$, $p[\Delta]=\lfloor \nicefrac{x}{\Delta}\rfloor \ge \Delta$ and, thus, there are at least $\Delta$ vertices $L\subseteq \T'$ with $d^+_0(v)=\lfloor \nicefrac{x}{\Delta}\rfloor$. Because of Inequality~\eqref{eq:gap-is-gett-smaller} it follows that $\forall v_1,v_2\in L: |d^+_{r-1}(v_1)-d^+_{r-1}(v_2)|\le 1$. Moreover, since $p^{r-1}[1]<d^-(v_{i+r})\le \Delta$, there is a vertex $v\in L$ such that $d^+_{r-1}(v)=0$ and thus $d^+_{r-1}(w)\le 1$ for all $w\in L$. 

 It remains to consider the vertex (by definition there can be at most one) $u\in \T'\setminus L$. Because the remaining outdegree can only decrease by one in each step and $d^+_{r-1}(v)=0$, there is a position $1\le l\le r-1$ with $d^+_{l}(v)=d^+_{l}(u)$ and, thus, by Inequality~\eqref{eq:gap-is-gett-smaller} $d^+_{r-1}(u)\le 1$. Thus, $d^+_{r-1}(v)\le 1$ for all $v\in \T'$. Hence, $\w(p^{r-1})=p^{r-1}[1]$ 
  and $\w(p^{r-1}=\w(p^\T_{i+r-1})$ imply a contradiction to $p^{r-1}[1]<d^-(v_{i+r})\le p^\T_{i+r-1}[1]$. 
\end{proof}
While \autoref{lem:cut-out-neutral-block} shows that we can cut out a \pTopOr with equal input and output potential, the following lemma shows that we can reinsert it right behind a high potential in any \TopOr.

\begin{lemma}\label{lem:introduce-neutral-block-into-high}
 Let $\phi$ be a \TopOr for a degree sequence~$\S$. Furthermore, let $\phi'$ be a \pTopOr with equal input and output potential~$p$ for a degree sequence~$\S'$. Then, for any position~$1\le i\le |\phi|$ with $\w(p^\T_i)\ge \Delta^2$ and $\w(p^\T_i)\ge \w(p)$ it holds that $\phi[1,i]\phi'\phi[i+1,n]$ is a \TopOr for $\S\uplus \S'$.
\end{lemma}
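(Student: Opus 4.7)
The plan is to apply \autoref{lem:high-pot} twice. I first establish that $\phi[1,i]\phi'$ is itself a \pTopOr with input potential $0^\Delta$ and some output potential $q$ of value $\w(p^\T_i)$. Once this is in hand, a second application of \autoref{lem:high-pot}, with the original $\phi$ at position $i$ and the replacement \pTopOr $\phi[1,i]\phi'$, immediately yields that $\phi[1,i]\phi'\phi[i+1,n]$ is a \TopOr for $\S\uplus\S'$, as claimed.

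For the first step, I would start from the embedding of $\phi'$ into a full \TopOr $P^s\phi' P^t$ guaranteed by \autoref{def:partial-top-order}, whose potential at position $|P^s|$ equals $p$. The natural idea of using \autoref{lem:high-pot} to swap $P^s$ for $\phi[1,i]$ fails because the values $\w(p)$ and $\w(p^\T_i)$ need not match; this value mismatch is the main obstacle. I would remedy it by padding: append to $P^s$ a block of $\w(p^\T_i)-\w(p)$ extra source vertices of type $\binom{0}{1}$, and prepend to $P^t$ the same number of extra sink vertices of type $\binom{1}{0}$, pairing them one-to-one by arcs that pass over $\phi'$. The resulting sequence $\tilde{P}^s\phi'\tilde{P}^t$ is a valid \TopOr, and at position $|\tilde{P}^s|$ its potential has value exactly $\w(p^\T_i)\ge\Delta^2$.

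Now \autoref{lem:high-pot} applies to $\tilde{P}^s\phi'\tilde{P}^t$ at position $|\tilde{P}^s|$ with replacement \pTopOr $\phi[1,i]$, whose input potential is $0^\Delta$ and whose output potential $p^\T_i$ has the required value. It yields that $\phi[1,i]\phi'\tilde{P}^t$ is a \TopOr. Since $\tilde{P}^t$ consists of exactly $\w(p^\T_i)$ vertices of type $\binom{1}{0}$, this realization directly fits \autoref{def:partial-top-order} and identifies $\phi[1,i]\phi'$ as a \pTopOr with input potential $0^\Delta$ and output potential $q$ of value $\w(p^\T_i)$, exactly as needed for the second application of \autoref{lem:high-pot} that concludes the argument.
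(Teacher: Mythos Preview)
Your proposal is correct and follows essentially the same approach as the paper: pad $P^s$ and $P^t$ by $\w(p^\T_i)-\w(p)$ matched source/sink pairs so that the potential value at the cut equals $\w(p^\T_i)\ge\Delta^2$, apply \autoref{lem:high-pot} once to replace the padded prefix by $\phi[1,i]$, read off that $\phi[1,i]\phi'$ is a \pTopOr with output potential of value $\w(p^\T_i)$, and apply \autoref{lem:high-pot} a second time. Your write-up is in fact a bit more careful than the paper's (which has the $\binom{0}{1}$/$\binom{1}{0}$ padding types swapped and a confusing sentence about the input potential of $\phi'$); the one point you leave implicit is why the output potential of $\phi[1,i]\phi'$ has value exactly $\w(p^\T_i)$, which follows from $\sum_{v\in\phi'}(d^-(v)-d^+(v))=0$ since $\phi'$ has equal input and output potential.
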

\begin{proof}
For a degree sequence~$\S$ let $\phi$ be a \TopOr and let $\phi'$ be  a \pTopOr with input and output potential~$p$ for a degree sequence~$\S'$. Furthermore, let $1\le i\le |\phi|$ be a position with $\w(p^\T_i)\ge \Delta^2$ and $\w(p^\T_i)\ge \w(p)$. We prove that $\phi[1,i]\phi'\phi[i+1,n]$ is a \TopOr for $\S\uplus \S'$.

We first show that $\phi[1,i]\phi'$ is a \pTopOr with input potential $0^\Delta$ and output potential~$p$ where $\w(p)=\w(p^\T_i)$. Then, from \autoref{lem:high-pot} it follows that $\phi[1,i]\phi'\phi[i+1,|\phi|]$ is a \TopOr for $\S\uplus \S'$.

By \autoref{def:partial-top-order} there are two degree sequences~$P^s$ and~$P^t$ such that there is a \TopOr $\phi_{s,t}=\phi^s\phi'\phi^t$ for $P^s\uplus P^t\uplus \S'$ such that~$\phi^s$ contains the vertices that match to~$P^s$ and~$\phi^t$ that match to~$P^t$. 
Now, in case of $\w(p^\T_i)>\w(p)$ we extend~$P^s$ by $\w(p_i)-\w(p)$ elements of type $\binom{1}{0}$ and $P^t$ by the same number of $\binom{0}{1}$ type elements. This shows that $\phi'$ is a \pTopOr with input potential~$0^\Delta$ and output potential~$p$ with $\w(p)=\w(p^\T_i)$.

From this together with \autoref{lem:high-pot} it follows that $\phi[1,i]\phi'\phi^t$ is a \TopOr. Since, by our assumption $\phi'$ is a \pTopOr with input and output potential~$p$ it follows that $\sum_{v\in \phi'} d^-(v)=\sum_{v\in \phi'} d^+(v)$. From this, since the potential at position~$i$ in $\phi[1,i]\phi'\phi^t$ is~$p^\T_i$ it follows that $\phi[1,i]\phi'$ is a \pTopOr with input potential~$0^\Delta$ and output potential~$p'$ with $\w(p')=\w(p^\T_i)$. Thus, by \autoref{lem:high-pot} it follows that $\phi[1,i]\phi'\phi[i+1,n]$ is a \TopOr.
\end{proof}
 With \autoref{lem:introduce-neutral-block-into-high} we are able to bound the minimum and maximum position where a high potential occurs.
\begin{proposition}\label{prop:short-begin-end-high-potential}
	If a \DAGR instance admits a high-potential realization, then there is also a high potential \TopOr such that the minimum position with high potential is at most $\Delta^{2\Delta}$ and the maximum position with high potential is at least $n-\Delta^{2\Delta}$.
\end{proposition}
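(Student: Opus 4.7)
The plan is to argue by induction/iteration, using Lemmas~\ref{lem:cut-out-neutral-block} and~\ref{lem:introduce-neutral-block-into-high} as the two complementary moves. Starting from an arbitrary high-potential \TopOr~$\phi=v_1,\ldots,v_n$, let $i$ be the smallest position with $\w(p^\phi_i)\ge\Delta^2$. If $i\le \Delta^{2\Delta}$ we are done for the left side; otherwise, I will show how to produce a new high-potential \TopOr whose smallest high-potential position is strictly smaller. Iterating this finitely many times (the target position decreases by at least one each round) yields a \TopOr satisfying the bound. A symmetric argument, applied to the last position with high potential (either by reversing the construction or by the mirrored version of both lemmas), handles the bound $j\ge n-\Delta^{2\Delta}$.

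The core counting step is a pigeonhole on low-value potentials. A potential $p\in\N^\Delta$ with $\w(p)<\Delta^2$ has all entries bounded by $\Delta^2$, so there are at most $(\Delta^2)^\Delta=\Delta^{2\Delta}$ such vectors. Therefore, if $i>\Delta^{2\Delta}$, the potentials $p^\phi_1,\ldots,p^\phi_{i-1}$ (each of value $<\Delta^2$ by minimality of $i$) cannot all be distinct, and there must exist indices $1\le l_1<l_2<i$ with $p^\phi_{l_1}=p^\phi_{l_2}$.

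Given such $l_1<l_2$, apply \autoref{lem:cut-out-neutral-block} to cut out $\phi[l_1+1,l_2]$: this gives a \TopOr $\psi:=\phi[1,l_1]\phi[l_2+1,n]$ for the reduced sequence, and the potentials of $\psi$ after position $l_1$ agree with those of $\phi$ after position $l_2$. In particular, in $\psi$ the first position with high potential is $i':=l_1+(i-l_2)<i$. Moreover, $\phi[l_1+1,l_2]$ is, by \autoref{def:partial-top-order}, a \pTopOr with equal input and output potential $p^\phi_{l_1}=p^\phi_{l_2}$. Since $\w(p^\psi_{i'})=\w(p^\phi_i)\ge\Delta^2\ge\w(p^\phi_{l_1})$, the hypothesis of \autoref{lem:introduce-neutral-block-into-high} is satisfied at position $i'$ of $\psi$, so reinserting $\phi[l_1+1,l_2]$ immediately after $i'$ produces a \TopOr $\phi':=\psi[1,i']\,\phi[l_1+1,l_2]\,\psi[i'+1,|\psi|]$ for the original sequence $\S$. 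In $\phi'$ the earliest high-potential position is still $i'<i$: positions up to $i'$ coincide with those of $\psi[1,i']=\phi[1,l_1]\phi[l_2+1,i]$ and inherit their (sub\nobreakdash-$\Delta^2$) potentials from $\phi$, while position $i'$ has potential $p^\phi_i\ge \Delta^2$.

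The key subtlety to verify carefully is that all potentials at positions preceding $i'$ in $\phi'$ remain below $\Delta^2$, so that $i'$ really is a valid candidate for the new ``minimum high-potential position''; this follows because those positions correspond bijectively, via \autoref{lem:cut-out-neutral-block}, to the positions $\phi[1,l_1]$ and $\phi[l_2+1,i-1]$, all of which had value $<\Delta^2$ in $\phi$. Repeating the procedure strictly decreases this minimum position each round, so after at most $n$ iterations we obtain a high-potential \TopOr whose earliest high-potential position is at most $\Delta^{2\Delta}$. The argument for the maximum position $j\ge n-\Delta^{2\Delta}$ is entirely symmetric (cut out a neutral block strictly to the right of $j$ and reinsert it just before $j$, which only requires mirrored versions of the same two lemmas). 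The principal obstacle is the bookkeeping to certify that reinsertion does not create a new earlier high-potential position, which is settled precisely by the correspondence of potentials guaranteed by \autoref{lem:cut-out-neutral-block}.
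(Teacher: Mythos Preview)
Your proof is correct and follows essentially the same route as the paper: pigeonhole on the at most $\Delta^{2\Delta}$ low-value potentials to find $l_1<l_2<i$ with $p_{l_1}=p_{l_2}$, cut out $\phi[l_1+1,l_2]$ via \autoref{lem:cut-out-neutral-block}, reinsert it right after the (now earlier) first high-potential position via \autoref{lem:introduce-neutral-block-into-high}, and iterate. One small slip in your sketch of the symmetric case: the neutral block cut from the right of $j$ should be reinserted just \emph{after} position $j$ (where the potential is high, so \autoref{lem:introduce-neutral-block-into-high} applies), not just before it.
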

\begin{proof}
 Let $\phi$ be a high-potential \TopOr and let $1\le i\le n$ be the minimum position where $\w(p_i)\ge \Delta^2$. Consider the case where ~$i>\Delta^{2\Delta}$. Thus, for all $1\le l <i$ it holds that $\w(p_l)<\Delta^2$. However, there are less than $\Delta^{2\Delta}$ potentials with value less than~$\Delta^2$ and, thus, there are two indices $1\le l_1<l_2<i$ with $p_{l_1}=p_{l_2}$. By \autoref{lem:cut-out-neutral-block}, the sequence $\phi[1,l_1] \phi[l_2+1,n]$ is a \TopOr where the potential at position $i-(l_2-l_1)$ is~$p_i$. Moreover, by definition $\phi[l_1+1,l_2]$ is a \pTopOr with input and output potential~$p_{l_1}$ where $\w(p_{l_1})<\w(p_i)$. Thus, by \autoref{lem:introduce-neutral-block-into-high} it holds that $\phi[1,l_1]\phi[l_2+1,i]\phi[l_1+1,l_2]\phi[i+1,n]$ is a \TopOr. Moreover, in this \TopOr, since $\sum_{v\in \phi[l_1+1,l_2]}d^-(v)-d^+(v)=0$, the minimum position with high potential is $i-(l_2-l_1)$. Applying the same operation iteratively as long as there are two positions with equal potential before the first high-potential results in a \TopOr where the minimum position with high potential is at most~$\Delta^{2\Delta}$.

  Basically, the same argumentation can be applied for the maximum position~$j$ where a high potential occurs. In case of $j<n-\Delta^{2\Delta}$, there have to be two indices $j< l_1 < l_2 \le n$ where $p_{l_1}=p_{l_2}$. Then, by  \autoref{lem:cut-out-neutral-block} the sequence $\phi[1,l_1]\phi[l_2+1,n]$ is a \TopOr and $\phi[l_1+1,l_2]$ is a \pTopOr with input and output potential~$p_{l_1}$ with $\w(l_1) < \w(p_j)$. Thus, by \autoref{lem:introduce-neutral-block-into-high} the sequence $\phi[1,j]\phi[l_1+1,l_2]\phi[j+1,l_1]\phi[l_2+1,n]$  is a \TopOr where the maximum position with high potential is $j+(l_2-l_1)$. Again, by applying this operation iteratively we get a sequence where the maximum position with high potential is at least $n-\Delta^{2\Delta}$.
\end{proof}
%
Having shown that we can assume that for the minimum position~$i$ and the maximum position~$j$ with high potential it holds that $i\le \Delta^{2\Delta}$ and $j\ge n-\Delta^{2\Delta}$ we next prove that one can sort all vertices between~$i$ and~$j$ arbitrarily by good and bad types.

\begin{proposition}\label{prop:sort-between-high-potential}
	Let $\phi=v_1,\ldots,v_n$ be a high potential \TopOr for a degree sequence~$\S$ and let $1\le i<j\le n$ be two arbitrary indices such that $\w(p_i) \ge \Delta^2$ and~$\w(p_j) \ge \Delta^2$.
	Furthermore, let $\phi'[i+1,j]$ be a permutation of the vertices in~$\phi[i+1,j]$ such that there is a position~$0\le l\le j-i$ with the property that the first~$l$ vertices in~$\phi'[i+1,j]$ are of good type and all subsequent vertices are of bad type.
	Then, the sequence $\phi[1,i]\phi'[i+1,j]\phi[j+1,n]$ is a \TopOr for~$\S'$.
\end{proposition}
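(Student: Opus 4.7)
The plan is to build a realization of $\phi[1,i]\phi'[i+1,j]\phi[j+1,n]$ by first showing that $\phi[1,i]\phi'[i+1,j]$ is a \pTopOr{} with input potential~$0^\Delta$ and an output potential~$p^*$ satisfying $\w(p^*)=\w(p_j)$, and then invoking \autoref{lem:high-pot} at position~$j$ of the original ordering~$\phi$ in order to append $\phi[j+1,n]$.

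I would construct the dag realizing $\phi[1,i]\phi'[i+1,j]$ by retaining the arcs already present in the realization of $\phi$ restricted to $\phi[1,i]$ (so that after processing position~$i$ the potential equals~$p_i$), and then iteratively well-connecting each vertex of the rearranged middle~$\phi'[i+1,j]$, tracking the potential via \autoref{lem:ordering-defines-potential}.

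The crucial step is to verify that each well-connection is feasible, i.e.\ that the first entry~$p[1]$ of the current potential is at least the indegree of the next vertex to be inserted. I would establish the stronger claim that the potential weight remains at least~$\Delta^2$ at every position throughout the middle section. This splits into two phases. For the good-type prefix (positions~$i+1$ to~$i+l$), every inserted vertex satisfies $d^-\le d^+$, so by \autoref{lem:ordering-defines-potential} the potential weight does not decrease and therefore remains at least $\w(p_i)\ge \Delta^2$. For the bad-type suffix, every inserted vertex satisfies $d^->d^+$, so the weight is strictly decreasing; since by degree conservation the weight at the end of the middle section must equal $\w(p_j)\ge \Delta^2$, the weight throughout this phase stays at least $\w(p_j)\ge \Delta^2$. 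Because the components of any potential are arranged in non-increasing order, $\w(p)\ge \Delta^2$ forces $p[1]\ge \w(p)/\Delta \ge \Delta \ge d^-(v)$ for any vertex~$v$, so each well-connection step succeeds.

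It follows that $\phi[1,i]\phi'[i+1,j]$ is a \pTopOr{} with input potential $0^\Delta$ and output potential~$p^*$ with $\w(p^*)=\w(p_j)$, the weight equality again being a consequence of degree conservation over the multiset of vertices in $\phi[1,j]$. Applying \autoref{lem:high-pot} to~$\phi$ at position~$j$, which satisfies $\w(p_j)\ge \Delta^2$ by hypothesis, then yields that $\phi[1,i]\phi'[i+1,j]\phi[j+1,n]$ is a \TopOr{} for~$\S$. The main obstacle is the two-phase potential-weight analysis; once the bound $\w(p)\ge \Delta^2$ is established at every intermediate position of the middle section, the inequality $p[1]\ge \Delta$ makes the feasibility of each well-connection automatic.
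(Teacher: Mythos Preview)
Your proposal is correct and follows essentially the same route as the paper: both establish by induction that every prefix $\phi[1,i]\phi'[i+1,i+l]$ is a \pTopOr{} whose output potential has weight at least~$\Delta^2$, relying on the good/bad ordering to control the weight. The only differences are presentational: you argue the weight bound directly via the two monotone phases (non-decreasing over the good prefix, strictly decreasing toward~$\w(p_j)$ over the bad suffix), whereas the paper does the same computation by contradiction; and you make explicit the final application of \autoref{lem:high-pot} to append~$\phi[j+1,n]$, which the paper leaves implicit.
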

\begin{proof}
	Assume that there is a high potential \TopOr for a degree sequence~$\S$ with two indices $1\le i\le j\le n$ such that $\w(p_i)\ge \Delta^2$ and $\w(p_j)\ge \Delta^2$. 
	We prove that $\phi[1,i]\phi'[i+1,j]\phi[j+1,n]$ is a \TopOr for~$\S$ for any reordering~$\phi'[i+1,j]$ of the vertices in~$\phi[i+1,j]$ where first the vertices of good types are consecutive in any ordering and then are followed by the bad type vertices.

	To this end, by induction on~$l$ with $1\le l\le j-i$  we show that the sequence $\phi[1,i]\phi'[i+1,i+l]$ is a \pTopOr with input potential~$0^\Delta$ and output potential~$p^l$ with~$\w(p^l)\ge \Delta^2$. First, we well-connect the vertex $\phi'[i+l,i+l]$ to the \pTopOr $\phi[1,i]\phi'[i+1,i+l-1]$ to get $\phi[1,i]\phi'[i+1,i+l]$. This is always possible since the output potential~$p^{l-1}$ of $\phi[1,i]\phi'[i+1,i+l-1]$ is a high potential. It only remains to show that the value of the output potential~$p^l$ of $\phi[1,i]\phi'[i+1,i+l]$ is at least~$\Delta^2$. Towards a contradiction suppose that it is not. This implies
	\begin{equation}\label{eq:con-assum}
		\sum_{v\in \phi'[i+1,i+l]} d^-(v)-d^+(v)>\w(p_i)-\Delta^2.
	\end{equation}
	Clearly, the vertex $\phi'[i+l,i+l]$ has to be a bad type, otherwise \autoref{eq:con-assum} cannot be true.
	However, it holds that
\begin{equation*}
 \w(p_i)-\sum_{v\in \phi[i+1,j]}(d^-(v)-d^+(v))=\w(p_j)\ge \Delta^2
\end{equation*}
and  thus 
	\begin{equation}\label{eq:con-sum}
		\sum_{v\in \phi[i+1,j]} d^-(v)-d^+(v)\le \w(p_i)-\Delta^2.
	\end{equation}
	Since $\phi'[i+1,j]$ is sorted by good and bad types and $\phi'[i+l,i+l]$ is of bad type, all vertices in $\phi'[i+l,j]$ are bad type vertices. Thus, \autoref{eq:con-sum} yields a contradiction to \autoref{eq:con-assum}.
\end{proof}
\autoref{prop:short-begin-end-high-potential} and \autoref{prop:sort-between-high-potential} lead to the central contribution of this section.

\begin{theorem}\label{thm:highPotential}
	If a \DAGR instance admits a high potential \TopOr, then it can be solved in $O(\Delta^{4{\Delta^{2\Delta}}}\cdot n)$ time.
\end{theorem}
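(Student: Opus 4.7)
The plan is to convert Propositions~\ref{prop:short-begin-end-high-potential} and~\ref{prop:sort-between-high-potential} into a direct branch-and-verify procedure. Together they imply that whenever a \DAGR instance admits a high potential \TopOr, it also admits one of the pattern $I \circ G \circ B \circ E$ shown in \autoref{fig:exampleHighPotential}: an initial block $I$ of length at most $\Delta^{2\Delta}$ that raises the potential value to at least~$\Delta^2$, then the subsequence $G$ consisting of every remaining good-type vertex in some internal order, then the subsequence $B$ consisting of every remaining bad-type vertex in some internal order, and finally a block $E$ of length at most $\Delta^{2\Delta}$. Since the internal orders inside $G$ and $B$ are irrelevant by \autoref{prop:sort-between-high-potential}, the only genuine freedom is in the choice of $I$ and $E$.

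The algorithm I would run enumerates, over all lengths $\ell_I, \ell_E \in \{0,1,\ldots,\Delta^{2\Delta}\}$, every type-sequence of length~$\ell_I$ for~$I$ and every type-sequence of length~$\ell_E$ for~$E$. Since there are at most $(\Delta+1)^2$ distinct types (by \autoref{def:type}), each side contributes at most $((\Delta+1)^2)^{\Delta^{2\Delta}} = O(\Delta^{2\Delta^{2\Delta}})$ type-sequences, giving a total of $O(\Delta^{4\Delta^{2\Delta}})$ candidate $(I,E)$-pairs. For each pair the algorithm subtracts the used type multiplicities from~$\S$; if this is infeasible the branch is discarded. Otherwise it places the remaining vertices in the middle, good-type vertices first (in any canonical order, e.g.\ grouped by type) then bad-type vertices, producing a candidate ordering $\phi$ of all $n$ vertices.

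Verification proceeds by sweeping $\phi$ left to right and maintaining the potential $p_i$ via the recurrence of \autoref{lem:ordering-defines-potential}. The candidate~$\phi$ is accepted iff, at every position~$i$, the demand $d^-(v_i)$ can be met by $p_{i-1}$ and the final potential satisfies $p_n = 0^\Delta$; in that case \autoref{lem:ordering-defines-potential} exhibits a well-connected realizing dag. Each verification costs $O(\Delta \cdot n)$, which multiplied by the $O(\Delta^{4\Delta^{2\Delta}})$ branches and absorbing the $\Delta$-factor into the branching term yields the claimed $O(\Delta^{4\Delta^{2\Delta}} \cdot n)$ bound.

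Correctness is almost entirely inherited from the preceding structural results: if the instance is a yes-instance admitting a high potential realization, \autoref{prop:short-begin-end-high-potential} and \autoref{prop:sort-between-high-potential} guarantee that at least one enumerated candidate coincides (up to permutation inside $G$ and $B$) with an actual \TopOr, and conversely any accepted candidate is realized by its associated well-connected dag. The main routine obstacle is purely combinatorial bookkeeping: when branching on $I$ and $E$ one must track the residual multiplicity of each of the $(\Delta+1)^2$ types so that both the type pattern for the middle is well-defined and the candidate $\phi$ actually uses exactly the input multiset~$\S$; no new structural idea is required beyond the two propositions already proved.
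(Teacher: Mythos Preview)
Your proposal is correct and follows essentially the same approach as the paper's proof: brute-force the short prefix~$I$ and suffix~$E$ over the at most $(\Delta+1)^2$ types (giving the $O(\Delta^{4\Delta^{2\Delta}})$ branching), fill the middle with the remaining elements sorted by good/bad type, and verify. You spell out the verification sweep and the type-multiplicity bookkeeping more explicitly than the paper does, but no new idea is involved.
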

\begin{proof}
 If an instance of \DAGR admits a high potential \TopOr, then by \autoref{prop:short-begin-end-high-potential} there is also a high potential \TopOr in which the occurrence of the first high potential is at most at position $\Delta^{2\Delta}$ and the last occurrence of a high potential is at least at position $n-\Delta^{2\Delta}$. Recall that there are at most~$(\Delta+1)^2$ types of elements in the given degree sequence, and thus by exhaustive search we can find these two subsequences in time $O(\Delta^{4{\Delta^{2\Delta}}}\cdot n)$. \autoref{prop:sort-between-high-potential} shows that the remaining degrees can be arbitrarily inserted between them, as long as they are sorted by good and bad types.
\end{proof}

\subsection{Low Potential Sequences}\label{sec:low potential-sequences}

In this section, we will provide an algorithm that finds a low potential realization (if it exists) for a \DAGR instance.
That is, a realization such that in the corresponding \tOrd the value of all potentials is strictly less than~$\Delta^2$. 
See \autoref{fig:low potential-example} for an example of such a realization.

\begin{figure}[t]
	\begin{center}
	\def\layersep{1.5cm}
	\def\layersepp{1cm}
	\begin{tikzpicture}[shorten >=1pt,->,draw=black!50, node distance=\layersep]
		\tikzstyle{every pin edge}=[<-,shorten <=1pt]
		\tikzstyle{vertex}=[circle,draw=black!80,minimum size=12pt,inner sep=0pt]
		\tikzstyle{$x$-vertex}=[vertex, fill=white];
		\tikzstyle{$a$-vertex}=[vertex, fill=white];
		\tikzstyle{annot} = [text width=4em, text centered]
		\node[vertex] (S1) at (0,0) {};
		\node[vertex] (S2) at (0,-\layersep) {};

		\node[vertex] (I1) at (\layersepp,0) {};
		\node[vertex] (I2) at (2 * \layersepp,0) {};

		\node[vertex] (I3) at (3 * \layersepp,-\layersep) {};
		\node[vertex] (I4) at (4 * \layersepp,-\layersep) {};

		\path (S1) edge (I1);
		\path (S2) edge (I1);
		\path (I1) edge (I2);
		\path (S1) edge[bend left=45] (I2);
		\path (S2) edge (I2);

		\path (I2) edge (I3);
		\path (S2) edge (I3);
		\path (I3) edge (I4);
		\path (S2) edge[bend right] (I4);
		\path (I2) edge (I4);

		\foreach \y in {1,...,6}{
			\node[minimum size=20pt] (z-\y) at (5 * \layersepp, -\y / 2 + 1) {};
		}
		\foreach \y in {3,...,6}{
			\path (I4) edge (z-\y);
		}
		\path (I2) edge (z-1);
		\path (I2) edge (z-2);

		\node[] (dots) at (5 * \layersepp, - \layersep / 2) {$\cdots$};

		\foreach \y in {1,...,6}{
			\node[minimum size=20pt] (x\y) at (5 * \layersepp, -\y / 2 + 1) {};
		}

		\node[vertex] (I11) at (6 * \layersepp,0) {};
		\node[vertex] (I12) at (7 * \layersepp,0) {};

		\node[vertex] (I13) at (8 * \layersepp,-\layersep) {};
		\node[vertex] (I14) at (9 * \layersepp,-\layersep) {};

		\foreach \y in {1,...,4}{
			\node[vertex] (T\y) at (10 * \layersepp, \y * -\layersep / 2 + 3 *\layersep / 4) {};
		}

		\path (x2) edge (I11);
		\path (x3) edge (I11);
		\path (I11) edge (I12);
		\path (x1) edge[bend left=45] (I12);
		\path (x4) edge (I12);

		\path (I12) edge (I13);
		\path (x5) edge (I13);
		\path (I13) edge (I14);
		\path (x6) edge[bend right] (I14);
		\path (I12) edge (I14);

		\path (I12) edge (T1);
		\path (I12) edge (T2);
		\foreach \y in {1,...,4}{
			\path (I14) edge (T\y);
		}

	\end{tikzpicture}
	\end{center}
	\caption{Realization for the sequence~$\binom{0}{2},\binom{0}{4},\binom{2}{1},\binom{3}{4},\binom{2}{1},\binom{3}{4},\binom{2}{1},\binom{3}{4}, \ldots ,\binom{2}{1},\binom{3}{4}$, $\binom{2}{0}$, $\binom{2}{0}$, $\binom{1}{0}$, $\binom{1}{0}$. Since this sequence basically consists of only two different types (not regarding types with indegree or outdegree equal to zero), it is easy to check that the pictured low potential realization (the highest occurring value of a potential is six) is the only existing one. The main part of the \TopOr consists of a repetition of $\binom{2}{1},\binom{3}{4}$. }
	\label{fig:low potential-example}
\end{figure}
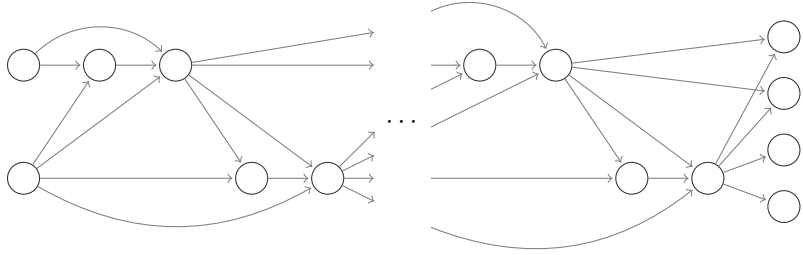
The crucial point to give an algorithm which solves such instances is that, besides some ``special gaps'' which can be handled afterwards, the length of a corresponding \TopOr can be upper bounded by a function~$f$ only depending on the maximal degree~$\Delta$.
Then, the algorithm, basically, consists of branching into all \TopOr{s} of length of at most~$f(\Delta)$ and, then, filling up the ``special gaps'' afterwards.

In the following we describe how to upper-bound the length. To this end, we introduce some notation.

\begin{definition}\label{def:super-type}
	In a \tOrd $\T=v_1,\ldots,v_n$ and for $1\le i<j\le n$, $\T[i,j]$ is a \emph{super-type}~$s_{i,j}$ of potential~$p\in \N^\Delta$ if $p^\T_{i-1}=p^{\T}_j=p$ and all potentials from position~$i$ till~$j-1$  are different from~$p$.
\end{definition}
Note that, by the definition of super-types, cutting out any super-type from the \tOrd results, by \autoref{lem:cut-out-neutral-block}, again in a \tOrd.
We use this fact later in order to reorder \tOrd{s}.

\begin{definition}\label{def:repetitions}
	A \emph{$k$-repetition} of a super-type~$s$ in a \tOrd~$\phi$ is a subsequence~$\psi$ of~$\phi$ with~$\psi = s^k$, that is, $k$ subsequent occurrences of~$s$. If $k$ is maximal under this condition, then it is called a \emph{maximal $k$-repetition}.
\end{definition}
%
%
Since in the low potential case the values of the occurring potentials in any \TopOr~$\T$ is upper-bounded by~$\Delta^2$, it follows that the number of different occurring potentials is upper-bounded by~$\Delta^{2\Delta}$.
Hence, there are potentials that occur multiple times in~$\T$.

In \autoref{fig:low potential-example} an example for a realizable degree sequence is given where the only existing \TopOr consists basically of one big $k$-repetition of the super-type~$\binom{2}{1},\binom{3}{4}$ of potential~$(2,2,1,1)^T$.
To solve such instances, the algorithm works in two steps.
First, it guesses a so-called \emph{\nRepOr}, that is a \TopOr where \emph{all} maximal $k$-repetitions are replaced by one occurrence of the corresponding super-type.
As one can see in \autoref{fig:low potential-example}, in this example the \nRepOr is very short: $\binom{0}{2},\binom{0}{4},\binom{2}{1},\binom{3}{4}, \binom{2}{0}, \binom{2}{0}, \binom{1}{0}, \binom{1}{0}$, where~$\binom{2}{1},\binom{3}{4}$ is the repeating super-type in a \TopOr.
Indeed, for any degree sequence~$\S$ admitting a low potential realization there exists a ``short'' \nRepOr (see \autoref{lem:number-super-types}).
Then, the \nRepOr can be computed by exhaustive search and, afterwards, the algorithm computes, based on an ILP  (integer linear program) formulation, the missing~$k$-repetitions (see \autoref{lem:SubstringComputingFPT}).
Next, we formalize the idea of \nRepOr{s}.

\begin{definition}
	Let~$\T =v_1, \ldots, v_n$ be a \TopOr for a \DAGR instance~$\S$. The ordering~$\T'$, that results from~$\T$ by replacing each maximal $k$-repetition, $2 \le k \le n$, by a 1-repetition of the corresponding super-type is called  \emph{\nRepOr}.
\end{definition}
Given a \nRepOr~$\T'$ and a \DAGR instance~$\S$, we say a \tOrd~$\T$ \emph{respects}~$\T'$ if~$\T'$ results from replacing all maximal $k$-repetitions by 1-repetitions for each super-type in~$\T$.

In order to prove that the length of \nRepOr{s} can be bounded in a function solely depending on~$\Delta$, we need a ``reordering operation'' for \tOrd{s}, similar to the high potential case.
Cutting out any super-type from the \tOrd results, by the definition of super-types and \autoref{lem:cut-out-neutral-block}, in a \tOrd. In the high potential case we have reinserted the cut out parts right behind a high potential.
Since in the low potential case there exists no high potential we need to show another way to insert the parts that we cut out.
Therefore, the next lemma shows that a \pTopOr with input and output potential~$p$ can be reinserted in a \TopOr at any position~$i$ with potential~$p$.

\begin{lemma}\label{lem:introduce-neutral-block}
	Let $\S$ be a degree sequence with a \TopOr $\phi=v_1,\ldots,v_n$. Furthermore, for a degree sequence~$\S'$ let $\phi'$ be a \pTopOr with input and output potential~$p$. Then, for all indices $1\le i\le n$ where $p^\phi_i= p$, the ordering $\phi'' = \phi[1,i]\phi'\phi[i+1,n]$ is a \TopOr for~$\S\uplus \S'$ with $p^\phi_j = p^{\phi''}_{j+|\phi'|}$ for all~$i < j \le n$.
\end{lemma}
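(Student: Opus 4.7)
My plan is to mimic the rewiring argument used in the proof of Lemma~\ref{lem:cut-out-neutral-block}, but in the opposite direction: instead of deleting a block with equal potentials at its borders, I will splice in the block $\phi'$ at a position of~$\phi$ whose potential matches the input (and hence output) potential of~$\phi'$. The overall goal is to build a dag whose \tOrd is exactly $\phi'' = \phi[1,i]\phi'\phi[i+1,n]$ and whose degree sequence is $\S\uplus \S'$; potential preservation for positions beyond~$i$ will then follow automatically because the vertices inserted between position~$i$ and position~$i+1$ have equal collective ingoing and outgoing degree with respect to what comes later.

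First, I would unfold \autoref{def:partial-top-order} for~$\phi'$: there exist sequences $P^s$ and $P^t$ (with $P^s$ encoding input potential~$p$ and $|P^t|=\w(p)$) together with a \TopOr $\phi^s\phi'\phi^t$ for $P^s\uplus \S'\uplus P^t$. Let $D_1$ be the dag for $\phi$ (realizing~$\S$) and $D_2$ be the dag for $\phi^s\phi'\phi^t$. Into the combined dag I copy all arcs of $D_1$ between vertices of $\phi[1,i]$, all arcs of $D_1$ between vertices of $\phi[i+1,n]$, all arcs of $D_2$ lying strictly inside $\phi'$, and then describe how to connect (a)~vertices of $\phi[1,i]$ to vertices of $\phi'$, (b)~vertices of $\phi'$ to vertices of $\phi[i+1,n]$, and (c)~vertices of $\phi[1,i]$ to vertices of $\phi[i+1,n]$.

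The key step is the bijection. Let $V^{1-i}\subseteq \phi[1,i]$ be the vertices with at least one neighbor in $\phi[i+1,n]$ in~$D_1$, and let $V^s\subseteq \phi^s$ be the vertices with at least one neighbor in $\phi'$ in~$D_2$; symmetrically let $V^t\subseteq \phi^t$ collect the vertices in~$D_2$ receiving an arc from $\phi'$, and $V^{i-n}\subseteq \phi[i+1,n]$ the vertices in $D_1$ receiving an arc from $\phi[1,i]$. Since $p^\phi_i=p$ coincides with the input potential recorded by~$P^s$, the number of vertices in $V^{1-i}$ with exactly $\ell$ outgoing neighbors to the right equals the number of vertices in $V^s$ with exactly $\ell$ outgoing neighbors to the right, for each $1\le \ell\le \Delta$. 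This gives a bijection $f\colon V^s\to V^{1-i}$ preserving these ``remaining outdegree'' profiles. Analogously, because the \emph{output} potential of $\phi'$ is again~$p$ (equal to the remaining-indegree profile of $V^{i-n}$ toward the right part of $\phi$), there is a bijection $g\colon V^t\to V^{i-n}$ preserving the corresponding ``remaining indegree'' profiles.

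Having these bijections, I replace every arc $(u,w)\in A(D_2)$ with $u\in V^s$, $w\in \phi'$ by the arc $(f(u),w)$, and every arc $(w,u)\in A(D_2)$ with $u\in V^t$, $w\in \phi'$ by the arc $(w,g(u))$. The remaining arcs from $\phi[1,i]$ to $\phi[i+1,n]$ in~$D_1$ are kept, except that for every vertex in $V^{1-i}$ I must redistribute outgoing arcs: each $v\in V^{1-i}$ now contributes $|f^{-1}(v)|$ of its original right-going arcs to $\phi'$ (choosing appropriate targets via the structure of $P^s$) and sends its residual right-going arcs to $V^{i-n}$ as before; symmetric bookkeeping applies on the right via~$g$. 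Since $f$ and~$g$ preserve the profile information and the input/output potentials of $\phi'$ coincide with~$p=p^\phi_i$, all degrees end up as required, no vertex receives duplicate arcs, and the direction of every new arc agrees with the ordering $\phi''$. This yields a dag realizing $\S\uplus\S'$ with \tOrd~$\phi''$. Finally, the equality $p^\phi_j=p^{\phi''}_{j+|\phi'|}$ for $j>i$ follows at once from \autoref{lem:ordering-defines-potential}: applying its recurrence from position $i+|\phi'|$ onward in $\phi''$ sees exactly the same sequence of degree tuples $d^-(v_j),d^+(v_j)$ as applying it from position~$i$ onward in~$\phi$, starting from the identical potential~$p$.

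The main obstacle I expect is the careful double-counting in the rewiring step: arcs formerly inside $D_1$ from $\phi[1,i]$ to $\phi[i+1,n]$ must be consistently redistributed so that the vertices in $V^{1-i}$ end up with the correct indegree to $\phi'$ \emph{and} the correct remaining outdegree to $\phi[i+1,n]$, without creating parallel arcs. Just as in \autoref{lem:cut-out-neutral-block}, the fact that the bijections preserve not merely cardinalities but the full profile vector~$p$ guarantees that such a redistribution exists, and the assumption of well-connectedness (guaranteed by \autoref{lem:ordering-defines-potential}) lets me always pick the targets in the canonical greedy manner without conflict.
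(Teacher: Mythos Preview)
Your first bijection $f\colon V^s\to V^{1-i}$ is fine and is exactly what the paper does (it calls it~$h$). The problem is the second bijection. You claim there is a bijection $g\colon V^t\to V^{i-n}$ preserving ``remaining indegree profiles'', justified by the fact that the output potential of~$\phi'$ equals $p=p^\phi_i$. But the potential $p^\phi_i$ records the \emph{outgoing} profile of~$\phi[1,i]$ (for each~$l$, how many vertices in $\phi[1,i]$ have at least $l$ neighbours in $\phi[i+1,n]$); it says nothing about the \emph{incoming} profile of~$\phi[i+1,n]$. Concretely, by \autoref{def:partial-top-order} every vertex of~$P^t$ has indegree exactly one, so $|V^t|=\w(p)$, whereas a single vertex of $\phi[i+1,n]$ may receive several arcs from $\phi[1,i]$, so $|V^{i-n}|<\w(p)$ is perfectly possible and your bijection~$g$ need not exist. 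The subsequent ``redistribution'' paragraph does not repair this (note also that $|f^{-1}(v)|=1$ for every~$v$ in the image of a bijection~$f$, so that sentence does not say what you want).

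The paper sidesteps this asymmetry by never touching the \emph{targets} in $\phi[i+1,n]$. After the $h$-step it observes that the outgoing-gap profile of the combined prefix $\phi[1,i]\phi'$ is again exactly~$p$ (this is where the output potential of~$\phi'$ being~$p$ is used), so there is a profile-preserving bijection~$g$ from $\bigcup_l V^l_{\phi[1,i]}$ to the set $\bigcup_l V^l_{\phi[1,i]\phi'}$ of vertices with positive outgoing gap. Each original crossing arc $(v,u)$ with $v\in\phi[1,i]$ and $u\in\phi[i+1,n]$ is then replaced by $(g(v),u)$. Because the target~$u$ is left fixed, the indegrees in $\phi[i+1,n]$ are automatically correct and no parallel arcs can arise; because~$g$ preserves the profile, every outgoing gap in $\phi[1,i]\phi'$ is exactly filled. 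Both bijections live on the \emph{source} side, where the potential is the correct bookkeeping device; that is the step your argument is missing.
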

\begin{proof}
	Let $\phi=v_1,\ldots,v_n$ be a \TopOr for a degree sequence~$\S$ and let $\phi'$ be a \pTopOr for a degree sequence~$\S'$ with input and output potential~$p\in \N^\Delta$. By \autoref{def:partial-top-order} there are degree sequences~$P^s$ and $P^t$ such that there is a \TopOr~$\phi'_{s,t}$ for $\S'\uplus P^s\uplus P^t$. Furthermore, in $\phi'_{s,t}$ the first (last) vertices correspond to~$P^s$~($P^t$, respectively).

	We show that for the order $\phi[1,i]\phi'\phi[i+1,n]$ there is a dag~$D$ that corresponds to it, and thus is a realization for $\S\uplus \S'$. We first copy from the dag that corresponds to~$\phi$ all arcs between two vertices in~$\phi[1,i]$ and all arcs between two vertices in~$\phi[i+1,n]$ into~$D$. Correspondingly, from the dag that corresponds to $\phi'_{s,t}$ we copy all arcs between two vertices in~$\phi'$ into~$D$. In the following we shall describe how to connect the remaining three ``components'' $\phi[1,i],\phi'$, and $\phi[i+1,n]$ in~$D$.

	For $1\le l\le \Delta$, let $V^l_{\phi[1,i]}$ be the vertices in~$\phi[1,i]$ that have exactly~$l$ neighbors in~$\phi[i+1,n]$. Symmetrically, let $V^l_{P^s}$ be the vertices in~$\phi'_{s,t}[1,|P^s|]$ that have exactly~$l$ neighbors in~$\phi'_{s,t}[|P^s|+1,|\phi'_{s,t}|]$.
	Clearly, since $p=p^\phi_i$, for all $1\le l\le \Delta$ it follows that $|V^l_{\phi[1,i]}|  =|V^l_{P^s}|$ and thus there is a bijection $h:\left(\bigcup_{l=1}^{\Delta}V^l_{P^s}\right)\rightarrow \left(\bigcup_{l=1}^{\Delta}V^l_{\phi[1,i]}\right)$ such that for $v\in V^l_{P^s}$ it holds that $f(v)\in V^l_{\phi[1,i]}$.
	We now connect the two ``components'' $\phi[1,i]$ and~$\phi'$ by adding for each arc $(u,v)$ with $u\in \phi'_{s,t}[1,|P^s|]$ and $v\in \phi'$ the arc $(h(u),v)$ to~$D$. Since~$h$ is a bijection it is clear that we have not introduced parallel arcs and the indegrees of the vertices~$\phi'$ in~$D$ now matches its element entries in~$\S'$.

	We now consider in~$D$ the vertices in~$\phi[1,i]\phi'$ whose outdegree is less than the outdegree in their corresponding element entry in~$\S\uplus \S'$. We denote this as the \emph{outgoing gap} of such a vertex. The set of vertices with outgoing gap greater than zero
	is a subset of the vertices in $\bigcup_{i=1}^l V^l_{\phi[1,i]}$ and the vertices in $\phi'$ that have in~$\phi'_{s,t}$ neighbors in~$\phi'_{s,t}[|P^s|+|\phi'|+1,|\phi'_{s,t}|]$. We will use these vertices to connect the two remaining ``components'' $\phi[1,i]\phi'$ and~$\phi[i+1,n]$. More specifically, for each $1\le l \le \Delta$ let $V^l_{\phi[1,i]\phi'}$ be the vertices in~$\phi[1,i]\phi'$ with outgoing gap exactly~$l$. Since~$h$ is a bijection and the output potential of~$\phi'$ is~$p$, it follows that there is a bijection $g:\left(\bigcup_{i=1}^l V^l_{\phi[1,i]}\right)\rightarrow \left(\bigcup_{i=1}^l V^l_{\phi[1,i]\phi'}\right)$. Thus, for every arc $(v,u)$ with $v\in \phi[1,i]$ and $u\in \phi[i+1,n]$ we add the arc $(g(v),u)$ to~$D$. Since~$g$ is a bijection it follows that we have not introduced parallel arcs and the indegrees and outdegrees of all vertices correspond to their elements in~$\S \uplus \S'$. Thus, the resulting dag~$D$ with \tOrd~$\phi''$ is a realization for~$\S\uplus \S'$ and $p^\phi_j = p^{\phi''}_{j+|\phi'|}$ for all~$i < j \le n$.
\end{proof}
Combining the cut operation from \autoref{lem:cut-out-neutral-block} and the insert operation from \autoref{lem:introduce-neutral-block} we arrive at the following lemma describing our ``reordering operation''.

\begin{lemma} \label{lem:move-neutral-block}
	Let $\S$ be a degree sequence with a \TopOr $\phi=v_1,\ldots,v_n$. Let~$1 \le i < j < k \le n$ be three positions with~$p_i=p_j=p_k$.
	Then the ordering $\phi[1,i]\phi[j+1,k]\phi[i+1,j]\phi[k+1,n]$ is a \TopOr for~$\S$.
\end{lemma}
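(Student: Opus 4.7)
The plan is to combine the two operations already established: the cut-out operation from \autoref{lem:cut-out-neutral-block} and the insertion operation from \autoref{lem:introduce-neutral-block}. Informally, I would first excise the middle block $\phi[i+1,j]$ (which has matching input and output potential by hypothesis), and then reinsert it at the corresponding position after the block $\phi[j+1,k]$ has been shifted forward.

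More precisely, I would proceed as follows. Since $p^{\phi}_i = p^{\phi}_j$, \autoref{lem:cut-out-neutral-block} applies to the indices $i<j$: the subsequence $\phi[i+1,j]$ is a \pTopOr{} with input and output potential $p^{\phi}_i$, and the ordering $\phi' := \phi[1,i]\phi[j+1,n]$ is a \TopOr{} for the degree sequence obtained from $\S$ by deleting the entries of the vertices in $\phi[i+1,j]$. Moreover, the same lemma guarantees that the potentials in $\phi'$ after position $i$ coincide with those in $\phi$ after position $j$; that is, $p^{\phi'}_{i+l} = p^{\phi}_{j+l}$ for all $1 \le l \le n-j$. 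Setting $k' := k-(j-i)$, this yields $p^{\phi'}_{k'} = p^{\phi}_{k} = p^{\phi}_i$.

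Next, I would apply \autoref{lem:introduce-neutral-block} to reinsert $\phi[i+1,j]$ into $\phi'$ at position $k'$. The hypotheses are met: $\phi[i+1,j]$ is a \pTopOr{} with input and output potential $p^{\phi}_i$, and $p^{\phi'}_{k'} = p^{\phi}_i$ by the previous paragraph. The lemma therefore certifies that
\[
\phi'[1,k']\,\phi[i+1,j]\,\phi'[k'+1,\,n-(j-i)]
\]
is a \TopOr{} for $\S$. Finally, unrolling the definition of $\phi'$ gives $\phi'[1,k'] = \phi[1,i]\phi[j+1,k]$ and $\phi'[k'+1,\,n-(j-i)] = \phi[k+1,n]$, so the displayed ordering is exactly $\phi[1,i]\phi[j+1,k]\phi[i+1,j]\phi[k+1,n]$, as required.

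I do not expect any real obstacle here: the only thing to be careful with is the index bookkeeping when shifting positions by $(j-i)$ after the cut, and to make sure the equality $p^{\phi'}_{k'} = p^{\phi}_k$ is invoked via the second conclusion of \autoref{lem:cut-out-neutral-block} rather than assumed. Everything else is a direct application of the two previous lemmas.
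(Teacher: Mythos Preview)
Your proposal is correct and follows exactly the approach the paper indicates: the paper states (without spelling out details) that the lemma is obtained by combining the cut operation of \autoref{lem:cut-out-neutral-block} with the insert operation of \autoref{lem:introduce-neutral-block}, and you have carried out precisely that combination with the right index bookkeeping.
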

With \autoref{lem:move-neutral-block}, it is easy to see that we can reorder any \tOrd~$\T$ such that there is only one consecutive occurrence of a super-type~$s$, that is, beside on maximal $k$-repetition of~$s$ there are no further occurrences of~$s$ in~$\T$.

Next, we show that we can bound the number and the length of super-types in a \nRepOr~$\T'$ for a \DAGR instance~$\S$ by some function only depending on the parameter~$\Delta$.
This allows to determine the \nRepOr by brute force in running time only depending on~$\Delta$.

\begin{lemma}\label{lem:number-super-types}
	Let~$\S$ be a realizable \DAGR instance. If there is a low potential realization for~$\S$, then there exists a \nRepOr~$\T'$ for~$\S$ and a \TopOr~$\T$ for~$\S$ that respects~$\T'$ such that the length of any repeating super-type in~$\T$ is bounded by~$\Delta^{2\Delta}$ and the length of~$\T'$ is bounded by~$\Delta^{2\Delta}(\Delta^{2\Delta^{2\Delta}+2\Delta}+\Delta^{2\Delta})$.
\end{lemma}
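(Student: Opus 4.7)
The plan is to start with an arbitrary low-potential \TopOr for $\S$ and iteratively restructure it — using the cut-out and insertion operations of \autoref{lem:cut-out-neutral-block} and \autoref{lem:introduce-neutral-block}, packaged as the swap operation of \autoref{lem:move-neutral-block} — to produce a \TopOr $\T$ whose induced \nRepOr $\T'$ realizes both bounds.

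In a first phase, I would coalesce all occurrences of the same super-type into one maximal $k$-repetition. If a super-type $s$ of potential $p$ occurs at two non-adjacent locations $\T[i,j]$ and $\T[i',j']$ with $j+1 < i'$, then the three positions $j$, $i'-1$, $j'$ all carry potential~$p$, so \autoref{lem:move-neutral-block} transposes $\T[j+1,i'-1]$ with $\T[i',j']$ and makes the two copies of~$s$ adjacent. After iterating over all super-types, no super-type occurs more than once in the induced \nRepOr $\T'$.

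In a second phase I would bound the length of each repeating super-type by $\Delta^{2\Delta}$. In a low-potential realization every potential has total value strictly less than $\Delta^2$, and since a potential is a non-increasing vector in $\N^\Delta$, there are at most $\Delta^{2\Delta}$ distinct potentials. Thus if a repeating super-type $s = \T[i,j]$ of potential~$p$ has $|s| > \Delta^{2\Delta}$, then by pigeonhole two interior positions $k < l$ of $s$ share a common potential $q \ne p$, so $\T[k+1,l]$ is a \pTopOr with equal input and output potential~$q$. The crucial use of the repeating hypothesis is that the analogous interior positions in another copy of~$s$ also carry potential~$q$; hence, by \autoref{lem:move-neutral-block}, the block $\T[k+1,l]$ can be factored out of~$s$ and merged into a neighboring copy, shortening the base pattern. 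Iterating until no further shortening is possible yields $|s| \le \Delta^{2\Delta}$ for every repeating super-type.

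Finally I would bound the length of $\T'$ by enumeration. A super-type of length at most $\Delta^{2\Delta}$ is determined by its starting potential (at most $\Delta^{2\Delta}$ choices) together with a sequence of at most $\Delta^{2\Delta}$ vertex types drawn from a pool of size $(\Delta+1)^2$, giving at most $\Delta^{2\Delta}\cdot ((\Delta+1)^2)^{\Delta^{2\Delta}} \le \Delta^{2\Delta^{2\Delta}+2\Delta}$ distinct bounded-length super-types. Each such super-type contributes at most $\Delta^{2\Delta}$ vertices to $\T'$, and an additive $\Delta^{2\Delta}\cdot \Delta^{2\Delta}$ slack accounts for super-types to which the shortening procedure does not apply (e.g.\ non-repeating ones at the boundary), yielding the claimed bound $\Delta^{2\Delta}(\Delta^{2\Delta^{2\Delta}+2\Delta} + \Delta^{2\Delta})$ on $|\T'|$. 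The main technical obstacle is the shortening phase: the reinsertion of $\T[k+1,l]$ into a neighboring copy of~$s$ must preserve the coalescence achieved in the first phase and the process must terminate; this is precisely where the \emph{repeating} hypothesis is essential, since a non-repeating super-type lacks a second copy to serve as a reinsertion site for the block of potential~$q$.
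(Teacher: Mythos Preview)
Your three-phase outline is natural, but Phase~2 has a genuine gap, and the final accounting leans on it.

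When a repeating super-type $s=s_1\,t\,s_2$ (with $t=\T[k+1,l]$ the $q$-block) is long, your move ``factor $t$ out of one copy and merge it into the neighbouring copy'' does \emph{not} shorten the base pattern: applying \autoref{lem:move-neutral-block} to two adjacent copies $s_1 t s_2\, s_1 t s_2$ yields $s_1 s_2\, s_1 t t s_2$, i.e.\ two \emph{different} super-types of potential~$p$, of lengths $|s|-|t|$ and $|s|+|t|$. The repetition is destroyed, and the contribution to the \nRepOr goes from $|s|$ up to $2|s|$. Iterating this does not converge to short super-types. More fundamentally, your mechanism needs a second copy of $s$ to supply a reinsertion site for the $q$-block, so it says nothing about a super-type that occurs only once in~$\T$; such a super-type can be arbitrarily long after your Phase~1, and the additive slack $\Delta^{2\Delta}\cdot\Delta^{2\Delta}$ you reserve for ``non-repeating ones at the boundary'' is not justified by anything in the argument.

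The paper avoids exactly this difficulty by choosing the reinsertion site differently. It processes potentials in order of their \emph{first} occurrence in the potential sequence~$\P$, latest first. For the current potential~$p$, if some super-type of potential~$p$ has length $>\Delta^{2\Delta}$, the repeated interior potential~$q$ must satisfy ${\rm first}_\P(q)<{\rm first}_\P(p)$; hence there is a position with potential~$q$ strictly to the \emph{left} of the first occurrence of~$p$, and the $q$-block is moved there via \autoref{lem:move-neutral-block}. This shortens the offending super-type without creating any new super-type of potential~$p$ and without requiring that super-type to be repeating. Once every super-type of potential~$p$ has length at most~$\Delta^{2\Delta}$, the paper coalesces them (your Phase~1, but done \emph{after} shortening and only for the current potential), bounds the whole suffix from ${\rm first}_\P(p)$ onward, and recurses on the prefix, which no longer contains~$p$. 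The ``pick the potential with latest first occurrence and push blocks backward into the prefix'' idea is what is missing from your plan.
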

\begin{proof}
	Let~$\T =v_1, \ldots, v_n$ be a low potential \TopOr for~$\S$. Let~$\P = p_0, \ldots, p_n$ be the corresponding sequence of potentials with values strictly less than~$\Delta^2$.
	By definition~$p_0=p_n=0^\Delta$. 
	We now construct a \nRepOr~$\T'$ from~$\T$.

	Let~$p$ be a potential. We denote by~${\rm first}_\P(p)$ the position of the first occurrence of~$p$ in~$\P$.
	Let~$p$ be the potential with~${\rm first}_\P(p) > {\rm first}_\P(q)$ for all potentials~$q \ne p$, and $p$ and~$q$ occur in~$\P$.
	That is, $p$~is the potential that occurs at last.
	Let~$i_1, \ldots, i_\ell$ be the occurrences of~$p$ in~$\P$.
	We now show how to construct a \nRepOr with~$i_j - i_{j-1} \leq \Delta^{2 \Delta}$ for all~$1 < j \leq \ell$.

	Assume that there is a~$j \in \{2, \ldots, \ell\}$ such that~$i_j - i_{j-1} > \Delta^{2 \Delta}$.
	Since we assume in this subsection that the value of any occurring potential is lower than~$\Delta^2$, this gives less than~$\Delta^{2 \Delta}$ possible potentials.
	Hence, there are two positions $i_{j-1}<h_1<h_2<i_j$ such that $p_{h_1}=p_{h_2}=:q$.
	Since ${\rm first}_\P(p) > {\rm first}_\P(q)$ there is a position~$k$ with~$k < i_1$ with $p_k=q$.
	By \autoref{lem:move-neutral-block}, $\T[1,k] \cdot \T[h_1+1,h_2] \cdot \T[k+1, h_1] \cdot \T[h_2+1,n]$ is also a \TopOr.
	After exhaustively applying this procedure we have a \TopOr~$\widetilde{\T}$ where the occurrences of~$p$ are~$\widetilde{i}_1, \ldots, \widetilde{i}_\ell$ with~$\widetilde{i}_{j} - \widetilde{i}_{j-1} < \Delta^{2\Delta}$.
	By the same argument, we can assume that~$n-\widetilde{i}_\ell < \Delta^{2\Delta}$ since every potential occurs at most once in~$\widetilde{\T}[\widetilde{i}_l,n]$.
	Thus, there are at most~$O( \Delta^{2\Delta^{2\Delta}})$ many different super-types of potential~$p$.
	Since these super-types are of the same potential~$p$, they can be reordered by using \autoref{lem:move-neutral-block} such that there is at most one subsequent occurrence of each super-type of potential~$p$.
	Thus, in a \nRepOr{} each super-type of potential~$p$ occurs at most once and, hence, the part with the super-types of potential~$p$ has length at most~$O(\Delta^{2\Delta^{2\Delta}} \Delta^{2\Delta})= O(\Delta^{2\Delta^{2\Delta}+2\Delta})$.
	Hence, the length of~$\widetilde{\T}[\widetilde{i}_1, n]$ can be upper-bounded by $O(\Delta^{2\Delta^{2\Delta}+2\Delta} + \Delta^{2\Delta})$.

	So far, the longest repeating super-type is of potential~$p$ and of length at most~$\Delta^{2\Delta}$.
	It remains to bound the length of~$\widetilde{\T}[1, \widetilde{i}_1-1]$.
	Note that the potential~$p$ does not occur in~$\widetilde{\T}[1, \widetilde{i}_1-1]$.
	Now, we can iteratively apply this procedure to deal with the other potentials in~$\widetilde{\T}[1, \widetilde{i}_1-1]$.
	By iteratively applying the described procedure for the at most~$\Delta^{2\Delta}$ different potentials, the length of the resulting \nRepOr is at most~$O(\Delta^{2\Delta}(\Delta^{2\Delta^{2\Delta}+2\Delta}+\Delta^{2\Delta}))$.

	In each iteration the repeating super-types that are deleted through the \nRepOr notion is of the particular potential dealt with in the iteration.
	Hence, the length of the longest such deleted repeating super-type is at most~$\Delta^{2\Delta}$.
	Thus, by reverting all the deletion steps, we get a \TopOr for~$\S$ such that the length of the longest repeating super-type is at most~$\Delta^{2\Delta}$.
\end{proof}
Using \autoref{lem:number-super-types}, the algorithm branches in all possibilities for \nRepOr{s} of length at most~$O(\Delta^{2\Delta}(\Delta^{2\Delta^{2\Delta}+2\Delta}+\Delta^{2\Delta}) )< \Delta^{3\Delta^{2\Delta}}$ for sufficiently large~$\Delta$.
This gives at most~$\Delta^{2\Delta^{3\Delta^{2\Delta}}}$ cases.
\autoref{lem:ordering-defines-potential} shows that by well-connecting the corresponding vertices one can easily check whether the \nRepOr~$v_1, \ldots, v_\ell$ is a \TopOr for the degree sequence~$\left\{\binom{d^-(v_1)}{d^+(v_1)}, \ldots, \binom{d^-(v_\ell)}{d^+(v_\ell)}\right\}$.
For a \nRepOr, the algorithm checks which of the~$(\Delta^2)^{\Delta^{2\Delta}} = \Delta^{2\Delta^{2\Delta}}$ possibly repeating super-types occur in the sequence of the particular case and stores the occurring ones in a set~$\superTypes$.
Then, given a \nRepOr~$\T'$ for a \DAGR instance~$\S$ and the set of~$\superTypes$ super-types that may repeat, the problem of computing a \TopOr that respects~$\T'$ is fixed-parameter tractable with respect to the number of super-types in~$\superTypes$. 
We shall show an ILP formulation for this problem. 
To this end, we formalize the problem and call it \SF.

\begin{center}
\begin{minipage}{0.95\textwidth}
  \defDecprob{\SF}
      {A multiset~$\S = \left\{ \binom{a_1}{b_1}, \ldots, \binom{a_n}{b_n} \right\}$, a \nRepOr~$\T'$ and a set of all super-types $\superTypes = \left\{ s_1, \ldots, s_\ell \right\}$ of~$\T'$ that have length at most~$\Delta^{2\Delta}$.} 
      {Is there a \TopOr~$\T$ for~$\S$ that respects~$\T'$ such that~$\T'$ results from replacing for each~$s \in \superTypes$ all maximal $k$-repetitions of~$s$ in~$\T$ by one occurrence of~$s$?}
\end{minipage}
\end{center}
Next, we show fixed-parameter tractability of \SF with respect to the parameter~$|\superTypes| = k$.
Since the number~$k$ of super-types is bounded by a function only depending on~$\Delta$, this completes our algorithm for the case that an input of \DAGR only admits a low potential realization.

\begin{lemma} \label{lem:SubstringComputingFPT}
	\SF is fixed-parameter tractable with respect to the parameter~$|\superTypes| = k$.
\end{lemma}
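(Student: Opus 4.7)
The plan is to formulate \SF as an integer linear program (ILP) whose number of variables is bounded solely by $k = |\superTypes|$, and then invoke Lenstra's theorem to solve it in time FPT in the number of variables.

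For each super-type $s \in \superTypes$, I introduce one non-negative integer variable $n_s$ which is intended to denote the total number of consecutive occurrences in $\T$ of the maximal $k$-repetition that corresponds to $s$ in $\T'$. Since $\T'$ already contains one occurrence of $s$, I require $n_s \geq 1$. The key observation (which I will use to write the constraints) is that, once each $n_s$ is fixed, the multiset of degree-tuples realized by $\T$ is determined: it equals the multiset of degree-tuples appearing in $\T'$, except that each single occurrence of $s$ in $\T'$ contributes $n_s$ copies of the multiset $M(s)$ of degree-tuples inside $s$. Hence for each of the at most $(\Delta+1)^2$ element types $t = \binom{a}{b}$ I add the linear equality
\begin{equation*}
\sum_{s \in \superTypes} n_s \cdot m_t(s) \;+\; c_t \;=\; |\{i : \binom{a_i}{b_i} = t\}|,
\end{equation*}
where $m_t(s)$ is the number of vertices of type $t$ inside one copy of $s$ and $c_t$ counts the vertices of type $t$ in $\T'$ that are not part of any super-type in $\superTypes$. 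This ILP has $k$ variables and $O(\Delta^2)$ constraints of encoding length polynomial in the input, so by Lenstra's algorithm its feasibility (and a feasible solution) can be decided in time $f(k)\cdot\text{poly}(|I|)$.

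It remains to argue that a feasible ILP solution actually yields a \TopOr respecting $\T'$ for $\S$, and conversely that any yes-instance of \SF gives a feasible solution. The ``only if'' direction is immediate: given a \TopOr $\T$ respecting $\T'$, setting $n_s$ to the length of the unique maximal $k$-repetition of $s$ in $\T$ (divided by $|s|$) satisfies all constraints by counting. For the ``if'' direction, given the values $n_s$, I construct $\T$ by iteratively inserting, for each $s \in \superTypes$, $n_s - 1$ additional copies of $s$ immediately after the unique occurrence of $s$ in $\T'$. Each such insertion is legal by \autoref{lem:introduce-neutral-block}: every $s \in \superTypes$ is a \pTopOr with equal input and output potential (by \autoref{def:super-type}), and the insertion position in the currently built ordering has exactly this potential. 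The equalities in the ILP guarantee that the resulting ordering realizes precisely $\S$, and by construction the non-repeating compression of $\T$ coincides with $\T'$.

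The only genuinely delicate point is the second direction, specifically checking that the potential at the insertion position truly matches the input potential of $s$ after the preceding insertions have been performed. This follows because each inserted block has equal input and output potential, so insertions at earlier positions do not change the potential at later positions in $\T'$ (by the potential-preservation clause of \autoref{lem:introduce-neutral-block}). Modulo this book-keeping, the lemma follows, with overall running time $f(k)\cdot\text{poly}(n)$ where $f$ comes from Lenstra's algorithm.
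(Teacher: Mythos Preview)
Your approach is essentially the paper's: an ILP with one non-negative integer variable per super-type, solved via Lenstra, followed by inserting the extra copies using \autoref{lem:introduce-neutral-block}. The paper uses variables $f_i \ge 0$ counting \emph{additional} copies, with the constraint $\sum_i f_i \cdot o(e,s_i) = o(e,\S) - o(e,\T')$ for each element type~$e$; your $n_s$ is just $f_s + 1$.

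There is one small glitch in your constraint. You define $c_t$ as the number of type-$t$ vertices in~$\T'$ that lie in no super-type from~$\superTypes$. But super-types of different potentials can be nested inside one another in~$\T'$ (nothing in the statement of \SF rules this out, and $\superTypes$ is declared to contain \emph{all} short super-types of~$\T'$). If $s_2$ sits inside $s_1$, your left-hand side counts the type-$t$ vertices of~$s_2$ both in $n_{s_1} m_t(s_1)$ and in $n_{s_2} m_t(s_2)$, while your $c_t$ subtracts them only once, so the equation is off. The paper's formulation sidesteps this by putting $o(e,\T')$ directly on the right-hand side and letting the variables count only the extra copies; replacing your constraint by $\sum_s (n_s-1)\, m_t(s) = o(t,\S) - o(t,\T')$ fixes it immediately. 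Likewise, do not assume a \emph{unique} occurrence of each~$s$ in~$\T'$; the paper only needs \emph{an} occurrence at which to anchor the insertion.

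Your closing remark about potential preservation at the insertion points (via the last clause of \autoref{lem:introduce-neutral-block}) is a nice explicit check that the paper leaves implicit.
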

\begin{proof}
	We show the fixed-parameter tractability by giving an ILP-formulation of the problem with~$O(k)$ variables.
	\citet{Len83} proved that ILP with~$p$ variables can be solved in $O(p^{4.5p}\cdot L)$ time where~$L$ is the input size.

	To solve the \SF instance, we use the following ILP-formulation:
	\begin{align}
		\forall 1 \leq i \leq k: && \text{\,} f_i & \ge 0 \label{eq:fiGreaterZero} \\
		\forall e \in \S: && \sum_{i = 1}^{k}  f_i \cdot o(e,s_i) & = o(e,\S) - o(e, \phi') \label{eq:elemenSum}
	\end{align}
	Here, the function~$o(e,M)$ denotes the number of occurrences of the element~$e$ in the multiset (or sequence)~$M$.
	The ILP uses the~$k$ integer variables~$f_1, \ldots, f_k$ and consists of~$k + |\S|$ equations that have together a size of~$O(k \cdot |\S|)$.
	Hence, the ILP can be solved in~$O(k^{4.5k}\cdot k\cdot|\S|)$ time.
	
	Now we describe how we use the solution of the ILP-formulation to create a \TopOr~$\T$ as described in the problem definition of \SF.
	For each super-type~$s_i \in \superTypes$ with~$f_i > 0$ insert a~$f_i$-repetition of~$s$ right after an occurrence of~$s$ in~$\T'$.
	By \autoref{lem:introduce-neutral-block}, the ordering that results from adding the $k$-repetitions results in a \TopOr for~$\T' \uplus \biguplus_{i, f_i > 0}\biguplus_{j=1}^{f_i} s_i = \S$.
	
	Next, we show that if the \SF-instance is a yes-instance, there exists a solution for our ILP-formulation.
	If there is a \TopOr~$\T$ for~$\S$ that respects~$\T'$, then there exists a set of super-types~$T_S$ such that replacing all maximal~$k$-repetitions of super-types~$s_i$ in~$T_S$ in~$\T$ by one occurrence of the corresponding super-type results in~$\T'$.
	Clearly,~$T_S \subseteq \superTypes$.
	Set~$f_i = 0$ for each~$s_i \in \superTypes \backslash T_S$ and for all~$s_i\in T_S$ set~$f_i=k-1$ where~$\T$ contains a maximal $k$-repetition of~$s$.
	Thus, Inequality~\eqref{eq:fiGreaterZero} is fulfilled for all~$1\le i \le k$.
	Since~$\T$ is a \TopOr for~$\S$, Inequality \eqref{eq:elemenSum} is fulfilled. 
\end{proof}
Combining \autoref{lem:number-super-types} and \autoref{lem:SubstringComputingFPT} shows fixed-parameter tractability for the low potential case. 
The running time in this case is~$O(\Delta^{2\Delta^{3\Delta^{2\Delta}}})$ for checking all \nRepOr{s}, $O(\Delta^{2\Delta^{2\Delta}}\cdot\Delta^2\cdot n)$ for constructing the \SF instance, and~$O(\Delta^{2\Delta^{2\Delta} 6\Delta^{2\Delta^{2\Delta}}} \cdot \Delta^{2\Delta^{2\Delta}}\cdot n)$ for solving the ILP.
Altogether we have the following theorem.

\begin{theorem} \label{thm:lowPotential}
	If a degree sequence admits a low potential \TopOr, then it can be found in~$\Delta^{\Delta^{\Delta^{O(\Delta)}}}\cdot n$ time.
\end{theorem}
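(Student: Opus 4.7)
The plan is to combine \autoref{lem:number-super-types} with \autoref{lem:SubstringComputingFPT} via exhaustive search over \nRepOr{s}. First, if the given degree sequence $\S$ admits a low potential realization, then by \autoref{lem:number-super-types} there exists a \nRepOr~$\T'$ for~$\S$ of length at most $L := \Delta^{2\Delta}(\Delta^{2\Delta^{2\Delta}+2\Delta}+\Delta^{2\Delta})$, and there is a corresponding \TopOr~$\T$ that respects~$\T'$ in which every repeating super-type has length at most $\Delta^{2\Delta}$. Since a \nRepOr is a sequence whose entries come from at most $(\Delta+1)^2$ types, the algorithm enumerates all sequences of length at most~$L$ over this type-alphabet; this yields $((\Delta+1)^2)^L \le \Delta^{2L}$ candidates, and by \autoref{lem:ordering-defines-potential} each candidate can be verified to be a valid \tOrd (and to have all potentials of value less than~$\Delta^2$) in time polynomial in~$L$ by computing the potentials incrementally.

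Second, for every candidate \nRepOr~$\T'$ that passes this check, the algorithm forms the set $\superTypes$ of all super-types of~$\T'$ that have length at most~$\Delta^{2\Delta}$. Since any super-type occurring in~$\T'$ is a contiguous substring whose length is at most~$\Delta^{2\Delta}$, and since at each position there are at most $(\Delta+1)^2$ types, the number of such potentially repeating super-types is at most $(\Delta+1)^{2\Delta^{2\Delta}} \le \Delta^{2\Delta^{2\Delta}+O(1)}$. The algorithm then invokes \autoref{lem:SubstringComputingFPT} on the \SF-instance $(\S, \T', \superTypes)$; by that lemma, this yields either a \TopOr for~$\S$ respecting~$\T'$ or rejects the candidate, in time depending only on $|\superTypes|$ (times a factor linear in~$n$).

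Third, correctness follows because \autoref{lem:number-super-types} guarantees that if a low potential realization exists, one of the enumerated \nRepOr{s} will be~$\T'$ with a matching~$\T$, and in that case the corresponding \SF-instance is a yes-instance and \autoref{lem:SubstringComputingFPT} returns a valid \TopOr for~$\S$. Conversely, any \TopOr produced by the algorithm is, by the correctness part of \autoref{lem:SubstringComputingFPT}, a \TopOr for~$\S$, and hence can be realized as a dag via \autoref{lem:ordering-defines-potential}.

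Finally, the total running time bookkeeping is the main obstacle, since several exponential towers compound: the number of \nRepOr candidates is at most $\Delta^{2L} \le \Delta^{2\Delta^{3\Delta^{2\Delta}}}$ for sufficiently large~$\Delta$; the ILP-based subroutine from \autoref{lem:SubstringComputingFPT} with $k = |\superTypes| \le \Delta^{2\Delta^{2\Delta}+O(1)}$ variables runs in time $k^{O(k)} \cdot n$, which is again upper-bounded by a triply-exponential function of~$\Delta$ times~$n$; and constructing each \SF instance contributes only a factor polynomial in~$\Delta$ and linear in~$n$. Summing these contributions, the overall running time is of the form $\Delta^{\Delta^{\Delta^{O(\Delta)}}} \cdot n$, which yields the bound claimed in the theorem. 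The most delicate point is keeping the tower at exactly three levels: one must check that no factor in the analysis raises a doubly-exponential quantity to another doubly-exponential power, which would push the bound beyond the stated form.
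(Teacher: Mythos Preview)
Your proposal is correct and follows essentially the same approach as the paper: enumerate all candidate \nRepOr{s} of length at most $L$ (the paper likewise bounds $L<\Delta^{3\Delta^{2\Delta}}$ and counts $\Delta^{2L}$ candidates), form the set~$\superTypes$ of short super-types, and then solve the resulting \SF instance via \autoref{lem:SubstringComputingFPT}. Your running-time bookkeeping matches the paper's, including the observation that $k^{O(k)}$ with $k\le \Delta^{2\Delta^{2\Delta}}$ stays within the triply-exponential tower.
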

\theoremautorefname{s} \ref{thm:highPotential} and \ref{thm:lowPotential} together lead to the main theorem of this section.

\begin{theorem}
	\DAGR is fixed-parameter tractable with respect to the parameter maximum degree~$\Delta$.
\end{theorem}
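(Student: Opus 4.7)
The plan is to observe that \autoref{thm:highPotential} and \autoref{thm:lowPotential} together cover all possibilities and both yield running times of the shape $f(\Delta)\cdot n$, which is exactly the form required for fixed-parameter tractability with respect to $\Delta$. The dichotomy is exhaustive by definition: any \TopOr $\T$ for a realization of $\S$ either satisfies $\w(p^\T_i) \ge \Delta^2$ for some position $i$ (in which case $\T$ is a high potential \TopOr) or $\w(p^\T_i) < \Delta^2$ for every position $i$ (in which case $\T$ is a low potential \TopOr). Note that this dichotomy is a property of the instance: if $\S$ is realizable at all, then at least one of ``$\S$ admits a high potential \TopOr'' or ``$\S$ admits a low potential \TopOr'' must hold.

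Based on this, I would describe the following algorithm. Given a \DAGR instance $\S$, first invoke the algorithm from \autoref{thm:highPotential}. If it produces a realizing dag, output it and stop. Otherwise, invoke the algorithm from \autoref{thm:lowPotential}. If it produces a realizing dag, output it and stop; otherwise, declare $\S$ to be a no-instance. Correctness is immediate from the dichotomy above: whenever $\S$ is a yes-instance, at least one of the two subroutines succeeds (by the respective guarantees of \autoref{thm:highPotential} and \autoref{thm:lowPotential}), and whenever $\S$ is a no-instance, neither can succeed since both subroutines only output realizing dags.

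For the running time, the high potential subroutine runs in $O(\Delta^{4\Delta^{2\Delta}} \cdot n)$ time and the low potential subroutine runs in $\Delta^{\Delta^{\Delta^{O(\Delta)}}} \cdot n$ time. Their sum is still of the form $f(\Delta) \cdot n$ for a computable function $f$ depending only on $\Delta$, which establishes fixed-parameter tractability with respect to the parameter $\Delta$. Since this proof is essentially a case distinction wrapped around two already-established theorems, I do not expect any significant obstacle; the entire argument is just spelling out that the two cases are exhaustive and that a sum of two $f(\Delta) \cdot n$-type bounds is again of the same type.
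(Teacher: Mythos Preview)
Your proposal is correct and takes essentially the same approach as the paper: the paper's proof consists of the single sentence that \autoref{thm:highPotential} and \autoref{thm:lowPotential} together yield the result, followed by the observation that the running time $\Delta^{\Delta^{\Delta^{O(\Delta)}}}\cdot n$ is dominated by the low potential case. You have simply spelled out the implicit dichotomy and the algorithmic wrapper in more detail.
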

Note that this is a mere classification result: The running time is~$\Delta^{\Delta^{\Delta^{O(\Delta)}}}\cdot n$. It is dominated by the low potential case.

\section{Conclusion and Open Questions}

Answering an open question by \citet{BM11} we proved the NP-completeness of \DAGR. 
Following the spirit of deconstructing intractability we figured out the necessity of large degrees in the NP-hardness proof by showing fixed-parameter tractability for \DAGR with respect to the maximum degree~$\Delta$.
The natural questions whether \DAGR is solvable in single-exponential time and whether it admits a polynomial-size problem kernel with respect to the parameter~$\Delta$ arises.
In our NP-hardness reduction other parameters occur with unbounded values, for instance, the number of types. 
Investigating this parameter is an interesting task for future work.

\paragraph{Acknowledgements.} 
Our thanks for for fruitful discussions about \DAGR goes to Annabell Berger and Matthias Müller-Hannemann. Moreover, we are grateful to Rolf Niedermeier for inspiring discussions and helpful comments improving the presentation.

{\footnotesize
\bibliographystyle{abbrvnat}
\bibliography{dag}

\newcommand{\bibremark}[1]{\marginpar{\tiny\bf#1}}
\begin{thebibliography}{18}
\providecommand{\natexlab}[1]{#1}
\providecommand{\url}[1]{\texttt{#1}}
\expandafter\ifx\csname urlstyle\endcsname\relax
  \providecommand{\doi}[1]{doi: #1}\else
  \providecommand{\doi}{doi: \begingroup \urlstyle{rm}\Url}\fi

\bibitem[Berger and Müller-Hannemann(2011{\natexlab{a}})]{BM11}
A.~Berger and M.~Müller-Hannemann.
\newblock Dag realizations of directed degree sequences.
\newblock In \emph{Proc.\ 18th FCT}, volume 6914 of \emph{LNCS}, pages
  264--275. Springer, 2011{\natexlab{a}}.

\bibitem[Berger and Müller-Hannemann(2011{\natexlab{b}})]{BM11a}
A.~Berger and M.~Müller-Hannemann.
\newblock How to attack the {NP}-complete dag realization problem in practice.
\newblock Manuscript, 2011{\natexlab{b}}.

\bibitem[Chen(1966)]{Che66}
W.-K. Chen.
\newblock On the realization of a $(p,s)$-digraph with prescribed degrees.
\newblock \emph{\bibremark{} J.\ Franklin Inst.}, 281\penalty0 (5):\penalty0
  406--422, 1966.

\bibitem[Downey and Fellows(1999)]{DF99}
R.~G. Downey and M.~R. Fellows.
\newblock \emph{Parameterized Complexity}.
\newblock Springer, 1999.

\bibitem[Erd{\H{o}}s and Gallai(1960)]{EG60}
P.~Erd{\H{o}}s and T.~Gallai.
\newblock Graphs with prescribed degrees of vertices (in {H}ungarian).
\newblock \emph{Math. Lapok}, 11:\penalty0 264--274, 1960.

\bibitem[Flum and Grohe(2006)]{FG06}
J.~Flum and M.~Grohe.
\newblock \emph{Parameterized Complexity Theory}.
\newblock Springer, 2006.

\bibitem[Fulkerson(1960)]{Ful60}
D.~Fulkerson.
\newblock Zero-one matrices with zero trace.
\newblock \emph{Pacific J.\ Math.}, 10\penalty0 (3):\penalty0 831--836, 1960.

\bibitem[Gale(1957)]{Gal57}
D.~Gale.
\newblock A theorem on flows in networks.
\newblock \emph{Pacific J.\ Math.}, 7:\penalty0 1073--1082, 1957.

\bibitem[Garey and Johnson(1979)]{GJ79}
M.~R. Garey and D.~S. Johnson.
\newblock \emph{Computers and Intractability: A Guide to the Theory of
  {NP}-Completeness}.
\newblock Freeman, 1979.

\bibitem[Hakimi(1962)]{Hak62}
S.~Hakimi.
\newblock On realizability of a set of integers as degrees of the vertices of a
  linear graph. i.
\newblock \emph{J.\ SIAM}, 10\penalty0 (3):\penalty0 496--506, 1962.

\bibitem[Havel(1955)]{Hav55}
V.~Havel.
\newblock {A remark on the existence of finite graphs.}
\newblock \emph{Casopis Pest. Mat.}, 80:\penalty0 477--480, 1955.

\bibitem[Hulett et~al.(2008)Hulett, Will, and Woeginger]{HWW08}
H.~Hulett, T.~G. Will, and G.~J. Woeginger.
\newblock Multigraph realizations of degree sequences: Maximization is easy,
  minimization is hard.
\newblock \emph{Oper.\ Res.\ Lett.}, 36\penalty0 (5):\penalty0 594--596, 2008.

\bibitem[Kleitman and Wang(1973)]{KW73}
D.~Kleitman and D.~Wang.
\newblock Algorithms for constructing graphs and digraphs with given valences
  and factors.
\newblock \emph{SIAM J.\ Discrete Math.}, 6\penalty0 (1):\penalty0 79--88,
  1973.

\bibitem[Komusiewicz et~al.(2011)Komusiewicz, Niedermeier, and Uhlmann]{KNU10}
C.~Komusiewicz, R.~Niedermeier, and J.~Uhlmann.
\newblock Deconstructing intractability---a multivariate complexity analysis of
  interval constrained coloring.
\newblock \emph{J.~Discrete Algorithms}, 9\penalty0 (1):\penalty0 137--151,
  2011.

\bibitem[Lenstra(1983)]{Len83}
H.~W. Lenstra.
\newblock Integer programming with a fixed number of variables.
\newblock \emph{Mathematics of Operations Research}, 8:\penalty0 538--548,
  1983.

\bibitem[Niedermeier(2006)]{Nie06}
R.~Niedermeier.
\newblock \emph{Invitation to Fixed-Parameter Algorithms}.
\newblock Oxford University Press, 2006.

\bibitem[Niedermeier(2010)]{Nie10}
R.~Niedermeier.
\newblock Reflections on multivariate algorithmics and problem
  parameterization.
\newblock In \emph{Proc.\ 27th STACS}, volume~5 of \emph{LIPIcs}, pages 17--32.
  IBFI Dagstuhl, Germany, 2010.

\bibitem[Ryser(1957)]{Rys57}
H.~Ryser.
\newblock Combinatorial properties of matrices of zeros and ones.
\newblock \emph{Canadian J.\ Math.}, 9:\penalty0 371--377, 1957.

\end{thebibliography}
}

\end{document}